\newtheorem{theorem}{Theorem}
\newtheorem{definition}{Definition}
\newtheorem{proposition}{Proposition}
\newtheorem{prob}{Problem}
\long\def \ignoreme#1{}
\begin{document}

\newtheorem{myprop}{Property}
\title{Leveraging Social Communities for Optimizing Cellular Device-to-Device Communications}

\author{Md Abdul Alim, Tianyi Pan, My T. Thai, and Walid Saad 
\IEEEcompsocitemizethanks{\IEEEcompsocthanksitem M. A. Alim, T. Pan, M. T. Thai are with the Department of Computer and Information Science and Engineering, University of Florida, Gainesville, FL 32611 (email: \{alim,tianyi,mythai\}@cise.ufl.edu).
\IEEEcompsocthanksitem W. Saad is with Wireless@VT, Bradley Department of Electrical and Computer Engineering, Virginia Tech, Blacksburg, VA. (email: walids@vt.edu).}
}

\maketitle


\begin{abstract}\label{label:abstract}
Device-to-device (D2D) communications over licensed wireless spectrum has been recently proposed as a promising technology to meet the capacity crunch of next generation cellular networks. However, due to the high mobility of cellular devices, establishing and ensuring the success of D2D transmission becomes a major challenge. To this end, in this paper, a novel framework is proposed to enable devices to form  multi-hop D2D connections in an effort to maintain sustainable communication in the presence of device mobility. To solve the problem posed by device mobility, in contrast to existing works, which mostly focus on physical domain information, a durable community based approach is introduced taking social encounters into context. It is shown that the proposed scheme can derive an optimal solution for time sensitive content transmission while also minimizing the cost that the base station pays in order to incentivize users to participate in D2D. Simulation results show that the proposed social community aware approach yields significant performance gain, in terms of the amount of traffic offloaded from the cellular network to the D2D tier, compared to the classical social-unaware methods.
\end{abstract}

\begin{IEEEkeywords}
Multi-hop device-to-device communication, optimization algorithm, social community, content delivery.
\end{IEEEkeywords}

\section{Introduction}\label{label:lintro}
The demand for wireless data services has increased exponentially in the past decade thus straining the capacity of existing wireless cellular networks \cite{han12} and \cite{fodor12}. One promising solution to meet this capacity crunch is to offload cellular traffic via the use of direct device-to-device (D2D) communications for enabling proximity services over the cellular licensed band \cite{pei2013resource}. To reap the benefits of D2D over cellular, there is a need to optimize and manage the added cellular interference resulting from D2D \cite{madan2010cell}. However, due to the high mobility of cellular devices, establishing and ensuring the success of D2D transmission is a major challenge. 

Recently, there has been an increased interest to operate D2D over cellular using multi-hop transmissions (henceforth referred to as \emph{multi-hop D2D}) \cite{lin2000multihop,kaufman2013spectrum,
asadi2014survey}. Such multi-hop D2D architectures can reduce the outage probability while potentially increasing the capacity of D2D communication by alleviating the effect of interference from the cellular users \cite{ma2012distributed,lee2012performance,vanganuru2012systemmilcom}. Unlike multi-hop ad hoc networks, which do not use the cellular spectrum and do not require any infrastructure, multi-hop D2D is controlled centrally by the base station (BS) for ensuring the QoS of both the cellular and D2D users simultaneously. In cellular multi-hop D2D scenarios, one must properly group the mobile devices in order to achieve the required quality-of-service (QoS). Such a grouping is particularly dependent on the mobility patterns of the devices. One major challenge in the analysis of such mobile, multi-hop D2D pertains to its strong dependence on dynamic human behavior which must be correlated with the complex QoS considerations of the cellular system.

For establishing D2D connections, the cellular BS must provide proper incentives to the users so that they become willing to share their resources for each others transmissions which in turn incurs cost to the BS \cite{saadincentiveJSAC}. Naturally, if most users are unwilling to participate in D2D transmission, the resources cannot be fully utilized, and the operation of the underlaid cellular D2D links will be jeopardized. For real-time content transmission, that must meet stringent latency requirements, a high mobility of the devices will disrupt an ongoing D2D session. This will eventually lead the D2D transmission to fail in delivering the content within the needed time bound. In such cases, the BS must initiate resource consuming cellular connection after dropping the interrupted session, thus reducing the overall network QoS and failing to exploit the benefits of D2D. Consequently, to enable reliable delivery of real-time content over multi-hop D2D at minimum BS cost, it is imperative to identify a set of reliable devices. Also, such devices must remain within the transmission range of one another during the D2D session to maintain the QoS. 

Despite significant research on cellular D2D \cite{fodor12,pei2013resource,madan2010cell}, there are very few works which consider the cellular multi-hop D2D case. One of the earliest related works is \cite{ma2012distributed} in which the relay selection problem for cellular D2D was studied. In \cite{wang2012wicom}, the authors consider D2D communication for relaying user equipment (UE) traffic while introducing a relay selection rule based on interference constraints. The works in  \cite{lee2012performance} and \cite{vanganuru2012systemmilcom} investigate the maximum ergodic capacity and outage probability of cooperative relaying in relay-assisted D2D communication. The results show that multi-hop D2D lowers the outage probability and improves cell edge throughput capacity by reducing the effect of interference from the cellular users. However, none of these works factors in the impact of mobility of devices on the system performance and on the successful delivery of time sensitive contents in particular. 

Recently, it has been observed that cellular devices carried by humans exhibit a peculiar pattern with respect to their \emph{physical encounters} both in space and time \cite{hui2011bubble} and \cite{cho2011friendship}. Such social encounters have been shown to exhibit a community structure property which implies that the network can be divided into groups of nodes with dense connections inside each group and fewer connections across groups. From a D2D perspective, users who encounter one another frequently will be likely to form a social community \cite{hui2008human}. Additionally, the longer a device stays close to another device, the mutual interaction between them grows  further compared to other sporadic contacts. Moreover, a large number of longer duration contacts over a period of time makes the mutual connection more reliable for the continuity of a D2D session which forms the basis of \emph{durable communities}. Leveraging such durable communities for improving D2D transmission constitutes therefore an opportunity that has hitherto not been explored.

The main contribution of this paper is to introduce a new framework that exploits durable social communities to enable successful transfer of a content between two devices with minimum cost using multi-hop D2D. We model the problem as a cost-effective device selection strategy on multi-hop D2D for real-time content delivery. We first formulate the durable community structure and introduce the concept of \emph{sustainable} and \emph{bridge} edges by exploiting the historical encounters of devices. We further propose a novel community detection method based on those previous encounters. Subsequently, we formulate the device selection problem as an optimization problem and we introduce an efficient method for finding the optimal set of devices on multi-hop path leveraging those social communities. This is in contrast to most existing works on multi-hop D2D that solely focus on system performance \cite{lin2000multihop,kaufman2013spectrum,
asadi2014survey,ma2012distributed,lee2012performance,vanganuru2012systemmilcom}. Simulation results show that our method outperforms classical social-unaware methods significantly on traces generated by the state-of-the-art mobility models. Note that, unlike the more classical case of delay tolerant networks (DTNs) \cite{hui2008human}, we consider only time sensitive content transfer between source and target with certain delay constraint on the total transmission time. This makes our  D2D transmission fundamentally different than the DTN which is opportunistic in nature. In addition, signal interference, resource allocation, noise and fading are intrinsic design parameters in D2D communication underlaying cellular networks which makes the design and operation of D2D completely different from DTNs and related ideas such as ad hoc networks.

The rest of the paper is organized as follows. Section \ref{label:model} introduces the system while Section \ref{label:problemFormulation} provides the problem formulation. Section \ref{label:lcr} discusses reliable device selection procedure for multi-hop D2D. Simulation results are analyzed in Section \ref{label:experiment} and conclusions are drawn in Section \ref{label:conclusion}.

\section{System Overview and Model Representation}\label{label:model}
\subsection{System Overview}
Consider the downlink transmission of an OFDMA cellular network consisting of a single base station (BS) and a set $\mathcal{N}$ of user equipments (UEs). The UEs are able to communicate with one another using D2D links that are underlaid on the cellular network. The total bandwidth $\mathcal{B}$ is divided into $F$ resource blocks (RB) in the set $\mathcal{F}$. We consider a co-channel network deployment in which $\mathcal{B}$ is shared between cellular and D2D transmissions while considering one RB per UE. We assume UE $i$ requests a content from BS which, in turn, selects UE $j$ ($i,j \in \mathcal{N}$), among other UEs having the content, as the source. The BS will enable direct D2D connections between UE $i$ and UE $j$ when the distance between them is within a desired D2D communication range $d_{max}$ which, in turn, corresponds to a required signal-to-interference-plus-noise ratio (SINR) as shown in Fig. \ref{fig:direct}. 

In practice, setting up reliable direct D2D connections while satisfying the quality-of-service (QoS)
requirements of both the traditional cellular UEs (CUEs) as well as the D2D UEs is challenging. On the one hand, the unreliable propagation medium and longer distance might affect the link quality between D2D devices (Fig. \ref{fig:dmax}). On the other hand, interference from other cellular and D2D UEs sharing the same RB will also contribute toward lowering the SINR (Fig. \ref{fig:cin}). In such low SINR cases, the use of multi-hop D2D communications can be beneficial to enhance the overall D2D QoS.

\begin{figure}
  \centering  
  \subfigure[D2D transmission with high SINR due to distance $d \le d_{max}$] {
		\includegraphics[scale=0.35]{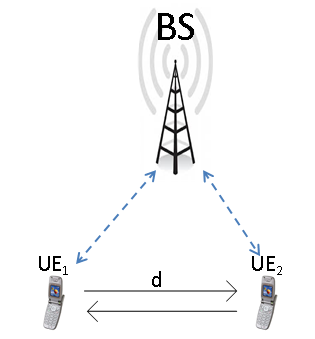}
		\label{fig:direct}
	}\hspace{1.5em}
	\subfigure[D2D transmission not possible due to low SINR as $d > d_{max}$ ] {
		\includegraphics[scale=0.40]{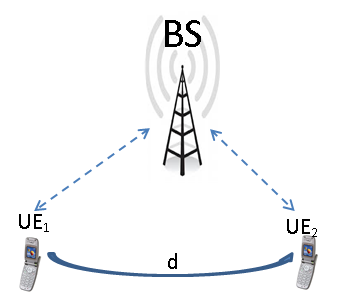}
		\label{fig:dmax}
	}\hspace{1.5em}
  \subfigure[Channel interference between cellular communication (UE$_3$ and BS) and D2D pairs (UE$_1$, UE$_2$) and (UE$_4$, UE$_5$)] {
		\includegraphics[scale=0.45]{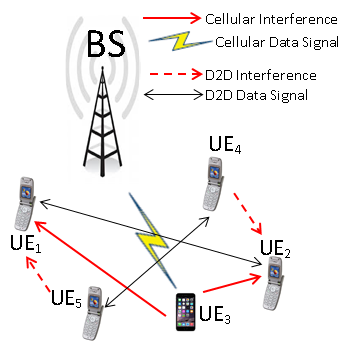}
		\label{fig:cin}
	}
  \caption{D2D communication scenario before the transmission takes place}
	\label{fig:b4scenario}
\end{figure}

Indeed, the effectiveness of multi-hop D2D depends on suitable device selection mechanisms. Ideally, for the D2D to successfully sustain data transmission, the devices that are chosen along the multi-hop D2D path must not move beyond the D2D range during a communication session so as to maintain the desired SINR target. Designing such mechanisms is challenging due to the coupling between mobility patterns, incentives for sharing resources, and network QoS. In our model, we focus on selecting a \emph{least cost reliable multi-hop path} for real-time content delivery from a source to a destination. It has been observed that mobility and physical encounter patterns are very closely related to social structures, and very often frequency and length of physical interaction is strongly correlated with proximity \cite{cho2011friendship}. 
Therefore, we leverage the historical encounter patterns of devices to identify social communities that gives indication on how devices come closer to each other. Thus, the goal of the proposed least cost multi-hop path approach is to select devices based on the social encounters and communities so as to make sure they stay within close proximity of one another during the D2D session. 

\begin{figure*}[!ht]
  \centering  
  	\includegraphics[scale=0.48]{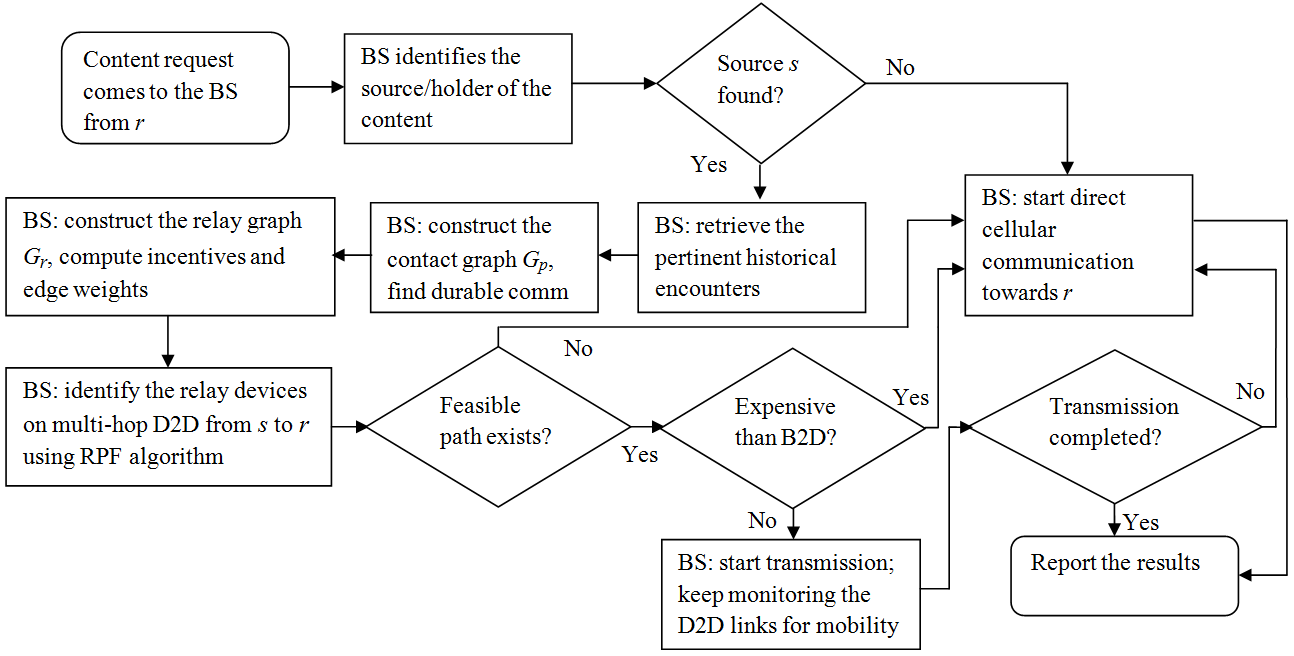}
	\caption{Flow chart for the proposed solution scheme}
	\label{fig:solution_flow}
\end{figure*}
A flow chart that summarizes the implementation of the proposed scheme is shown in Fig. \ref{fig:solution_flow}. Whenever a request for a content comes to the BS from a device $r$, the BS identifies the source of the content. If no such device is found to hold the content, then, the content is transmitted directly from the BS towards $r$ using cellular communication. 

If a device $s$ having the content is identified, the BS initiates the durable community detection phase by invoking the DCD algorithm that is detailed in Subsection~\ref{ss:dci}. The BS then assigns proper edge weights to each of the D2D pairs present in its coverage area using the social-based technique that is  explained in Subsection~\ref{wijassignment}. Finally, the BS identifies the multi-hop D2D path to relay the content from $s$ to $r$, if there exists any such feasible path that can deliver the content within a certain time threshold $t_{max}$, instead of; otherwise, the BS initiates a direct cellular connection towards $r$. In the former case (when a feasible path exists), if the total incentive that the BS has to pay to the relay devices on the multi-hop path is larger than the direct BS to $r$ cost which is termed as \emph{B2D cost}, the BS also initiates a direct cellular connection towards $r$ rather than serving the content via D2D. Once the content transmission starts via multi-hop D2D, the BS keeps track of the pairwise mobility of devices for each hop. If the device mobility leads to a minimum allowable SINR that is below a certain threshold, the multi-hop D2D connection can no longer be sustained. At this point, the BS  has to initiate a direct cellular connection towards $r$ to fulfill its content request. Next, we describe the necessary system model.

\subsection{System Model}
In our network, we consider real-time content sharing among mobile D2D users with strict delay requirements. We assume device $r$ requests a content of size $b$ from the BS at time $t$. The BS identifies $s$, another UE, as the peer device having the content that would serve the request of $r$ via D2D. There are several approaches to identify a suitable source for a requested content in literature \cite{wang2014game} which is not our focus in this paper. Hereinafter, $s$ is referred to as the source device and $r$ as the destination device. However, as discussed previously, these UEs may not be able to communicate directly due to physical constraints and hence a multi-hop path needs to be identified for effective content transfer.

\begin{center}
  \begin{table*}[t]
		\caption{Summary of Important Symbols}
		\centering
    \begin{tabular}{ | l | l | l | l |}
    \hline
    Symbol & Description & Symbol & Description \\		
	\hline			
					$\mathcal{N}$ & Set of UEs in the network &					 $G_r$ & Multi-hop D2D graph at time $t$	\\
			$\mathcal{B}$ & Bandwidth of the network & 					$\eta$ & Speed of light \\												
			$\mathcal{F}$ & Set of RBs &							$c_{i,j}$ & Cost that BS pays to incentivize $i$ to send content to $j$ \\						
			$F$ & Total number of RBs &							$\psi$ & Shadowing component \\		
			$d_{max}$ & Maximum D2D range &							${\Delta}t$ & Historical encounter span\\	
			$s$ & Source of the content &							$t$ & Actual time when content request is generated from $r$\\	
			$r$ & Destination/requester of the content &					$G_p$ & Contact graph\\				
			$b$ & Content size &								$t_c$ & Time that would have taken to transmit content $c$ of \\				
			& & & size $b$ from $s$ to $r$ if they were within $d_{max}$\\
			$R_{i,j}$ & Achievable data rate between device $i$ and $j$ &			$D_{ij}$ & Encounter duration between device $i$ and $j$\\	
			$l_z$ & Bandwidth of resource block $z\in\mathcal{F}$ &				$\delta$ & Predefined stability threshold\\	
			$\mathcal{Z}$ & Set of devices sharing same RB $z$ &				$L_i(t)$ & Position of device $i$ at time $t$\\	
			$g_{i,j}$ & Channel gain between device $i$ and $j$ &				$\bar{D}_{ij}$ & Average contact duration between device $i$ and $j$ in ${\Delta}t$\\	
			$\sigma^2$ & variance of the Gaussian noise &					$\lambda_{i,j}$ & Average number of encounters between $i$ and $j$\\	
			$p_i$ & Transmit power of device $i$ &						$G^e$ & Encounter history graph\\				
			$\gamma_{i,j}$ & SINR at device $j$ for the link $i \rightarrow j$ &		$\zeta$ & Strength threshold\\					
			$d_{i,j}$ & Distance between device $i$ and $j$ &				$\rho$ & Predetermined weight factor\\						
			$\alpha$ & Path loss exponent &							$\mathcal{C}$ & Set of durable communities\\	
			$m_0$ & Fading component &							$w^b_{uv}$ & Weight of bridge edge between $u$ and $v$\\			
			$p_{i,j}$ & Received power at device $j$ for the link $i \rightarrow j$ &	$w^s_{uv}$ & Weight of sustainable edge between $u$ and $v$\\					
			$t_{i,j}$ & Time required to transmit content  &			$B_{u,v}$ & Percentage of actual encounter duration larger than $t_c$\\				
			& from device $i$ to $j$ & & between device $u$ and $v$ \\
			$t^p_{i,j}$ & Propagation delay between device $i$ to $j$ &			$h_C$ & Durability of community $C$\\	
			$t^x_{i,j}$ & Transmission delay from device $i$ to $j$ &			$k$ & Number of detected communities\\

    \hline
    \end{tabular}		
		\label{tableN}
	\end{table*}
\end{center}

The achievable rate $R_{i,j}$ for the transmission between a device $i$ and device $j$ is
\begin{equation}
R_{i,j} = l_z \log_2(1+\gamma_{i,j}),
\label{rij}
\end{equation}
where $l_z$ is the bandwidth of RB $z \in \mathcal{F}$ used by $i$ for its data transmission to $j$, $\gamma_{i,j}$ denotes the SINR for $j$ from $i$. For the link between $i$ and $j$, considering signal interference from all other devices using the same RB $z$, we have
\begin{equation}
\gamma_{i,j} = \frac{p_i g_{i,j}}{\sum_{i' \in \mathcal{Z},i'\neq i}^{} p_{i'} g_{i'j} + \sigma^2},
\label{gij}
\end{equation}
where $\mathcal{Z}$ is the set of devices sharing RB $z$, $g_{i,j}$ is the channel gain between $i$ and $j$, $p_i$ is the transmit power of device $i$, and $\sigma^2$ is the variance of the Gaussian noise. Here, we note that the BS and the devices operate in a half-duplex mode and the same set of resources (i.e. subcarriers) is shared for transmission of content. In our model, devices on several D2D links can transmit simultaneously and hence can cause interference with one another when using the same RB. However, devices on different hops do not interfere with one another over the same RB. The proposed approach can accommodate any algorithm for allocating RBs to the various D2D and cellular links. Without loss of generality, hereinafter, we adopt graph coloring techniques such as in \cite{tan2011graph} to perform this assignment. In our model, in line with existing D2D works \cite{asadi2014survey} and for tractability, we do not consider interference on the reverse, acknowledgment link. We have observed in our experimental evaluation, incorporating reverse link interference into the formulation does not significantly affect the conclusions.

Due to the fact that one cannot know which D2D links will be actively relaying at every hop until we execute our proposed relay path finder (RPF) algorithm described in Section \ref{label:lcr}, we assume that all the D2D links are active. This enables us to compute the data rate of the links which is required by the RPF algorithm for choosing relays that can deliver the content within $t_{max}$. In order to reduce the interference between the cellular links and the D2D links, we identify the D2D links which are within close proximity to the cellular link and we ensure that they do not reuse the same RBs. We only allow those links which are sufficiently far apart to share the same resources. This is essentially similar to the classical frequency reuse concept used in cellular networks, but now we apply to D2D transmissions. For each D2D link $(i,j)$, we identify the interference set for this link. An interference set for $(i,j)$ contains all the links whose transmitter or receiver are within a certain distance from the transmitter $i$ of the link $(i,j)$ and could potentially cause large interference. In the graph coloring based resource allocation scheme that we use, these links are assigned different resource blocks. Links that are significantly far away from each other are allowed to have same RB. Once the RB allocation is complete, we utilize Eq. \eqref{rij} and \eqref{gij} to compute the data rates for each link. 

For the wireless network, we consider distance-dependent path loss and multipath Rayleigh fading along with log-normal shadowing. Thus, the received power of each link between devices $i$ and $j$ can be described as $p_{i,j}= p_i . (d_{i,j})^{-\alpha} . |m_0|^2 . 10^{\psi/10}$, where $p_i$ is the transmit power of device $i$, $\alpha$ is the path loss exponent, $m_0$ is the fading component, and $\psi$ is the log-normal shadowing component. 

Given this SINR model, we now formulate the time required for the transmission between device $i$ and $j$. The time $t_{i,j}$ in the link (hop) from $i$ to $j$ is defined as
\begin{equation}
t_{i,j}  = t_{i,j}^p + t_{i,j}^x = \frac{d_{i,j}}{\eta}+\frac{b}{R_{i,j}},
\label{tij}
\end{equation}
where $t_{i,j}^p$ is the propagation delay between device $i$ and device $j$ which, in turn, depends on  the distance $d_{i,j}$ of the single hop link between $i$ and $j$, and the speed of light $\eta$. The transmission delay, $t_{i,j}^x$, depends on the packet size $b$ and on the achievable data rate for the transmission between $i$ and $j$ as per (\ref{rij}).

To incentivize a certain device $i$ for sharing its resources with another device $j$, the BS must incur a cost $c_{i,j}$. A device that experiences a good channel and that has a higher transmit power will be able to transfer content more efficiently than others, and hence is a better candidate for D2D from the BS's perspective. Accordingly, we have,
\begin{equation}
 c_{i,j}= p_{i,j} = p_i \cdot (d_{i,j})^{-\alpha} \cdot |m_0|^2 \cdot 10^{\psi/10}.
 \label{equ:weightNode}
\end{equation}

 This incentive/cost can be in the form of monetary remunerations, coupons, or free data. We summarize most of the important notations used throughout this work in Table \ref{tableN}. Next, we define the necessary framework for formulating the problem of identifying reliable devices on multi-hop D2D.

\section{Problem Formulation and Solution}\label{label:problemFormulation}

\subsection{Problem Formulation}

Given this wireless network model, the next goal is to find a set of devices that would enable feasible multi-hop D2D communications while satisfying stringent delay constraints and minimize the BS's cost, as per \eqref{equ:weightNode}. We introduce the concept of feasible path formally as follows:

\begin{definition}(Feasible Path)
\normalfont Given a cellular network $G=(\mathcal{V,E})$, where $\mathcal{V}$ is the set of all devices and $\mathcal{E}$ is the set of links that connects them, a \emph{feasible path} from source $s$ to destination $r$ in $G$, is an ordering $P$ of devices in $\mathcal{V}$, where $P=<i_1, \ldots, i_k>$ such that $i_1=s$, $i_k=r$, $(i_j,i_{j+1}) \in \mathcal{E}$ and given the interference and mobility of devices, $\sum_{i=1, j=i+1}^{k} t_{i,j} \le t_{max}$ where $t_{i,j}$ and $t_{max}$ indicate the time required to transfer a content from $i$ to $j$ and the maximum allowed content sharing time, respectively.
\end{definition}

For successful delivery of a content using  multi-hop D2D, the devices on a \emph{feasible path} must also remain within a range that corresponds to the desired SINR throughout the D2D session. To combine these properties, we now present the cost-effective device selection problem for multi-hop D2D (CEDS-MD):

\begin{prob}(CEDS-MD)
\normalfont Cost-effective device selection for multi-hop D2D (CEDS-MD) seeks to identify a \emph{feasible path} $P$ that results in minimum cost of transmission from source $s$ to destination $r$ by minimizing the device cost denoted by $C(P)=\sum_{(i,j) \in P}^{}  c_{i,j} $, where $c_{i,j}$ is the cost of BS for incentivizing device $i$ to share resource with device $j$, and $i$ is the immediate predecessor of $j$ in the feasible path and the devices on $P$ remain within the D2D transmission range throughout $t_{max}$ as governed by the cellular base station.
\label{def:lcdtA}
\end{prob}
\subsection{Social Community Aware Cellular Network}
Incorporating social based device proximity information with conventional
physical layer metrics enables better resource utilization and enhanced traffic offload in D2D \cite{proebster2012context}. 
However, these measures are  not able to capture the impact of user mobility on the successful completion of D2D transmission particularly when devices are moving rapidly during the transmission. Consequently, there is a need to adopt a more realistic view for the social context by basing it on other social dimensions such as the \emph{actual encounters between users}. Device encounters have been shown to satisfy the community structure property \cite{cho2011friendship} and thus, the stability of D2D session must be correlated with durable social communities.

Therefore, as a first step towards solving CEDS-MD, we must identify \emph{durable social communities} based on the previous encounter histories. When two devices $i$ and $j$ are within the transmission range $d_{max}$ of each other, they can communicate in D2D mode under the control of the BS. Assuming a content request is generated at a given time $t$ during a day, the BS extracts all the specific historical encounters that start around $t$ in order to realistically predict the mobility pattern of the devices. To this end, the BS constructs a physical contact graph $G_p$ which is a weighted undirected graph and detects the durable communities. Devices belonging to the same community are more likely to have longer contact duration and, hence, they will get more priority to be chosen on the multi-hop D2D if they happen to be within each others proximity at content request time $t$.

In $G_p$, each edge represents the average duration of contact between two devices for a certain span ${\Delta}t$ of previous days. ${\Delta}t$ can be any number of previous days (or hours) depending on the way the encounter histories are being preserved in the BS. If $t_c$ is the time required for the content to be transmitted from $s$ to $r$ when they are within the range $d_{max}$, the BS will need to consider those previous encounters in ${\Delta}t$ that have an average duration of at least $t_c$. Although encounters having duration at least $t_c$ are good candidates for reliable connections, the longer the duration the more durable it is. To put the duration length into perspective, we not only take into account the encounters having duration of sufficient length ($t_c$) but also all the previous encounters with duration $D_{ij} \ge (1+\delta) t_c$ where the stability threshold, $\delta \ge 0$, is a user controlled parameter that reflects the importance of the duration length of encounters beyond $t_c$. At the same time, we also emphasize on the impact of encounter rates of two devices in ${\Delta}t$ paired with the duration. Next we will formally define the notions related to encounters.

\subsubsection{Community Structure and Durable Community}\label{label:lscd}
Now, we introduce the necessary terms to describe encounters in the context of D2D and formally define the notion of a durable community structure in this subsection. 

Assume that $i$ and $j$ come into the communication range at time $t_e$, that is, $||L_i(t_e^-) - L_j(t_e^-)|| > d_{max}$ and $||L_i(t_e) - L_j(t_e)|| \le d_{max}$, where $t_e^-$ denotes the time before $t_e$, $L_i(t)$ the position of user $i$ at time $t$, $d_{max}$ the D2D transmission range as determined by the BS and $||.||$ the distance measure. With this, we can define the D2D contact duration:
\begin{definition}
\normalfont The \emph{D2D contact duration} between users $i$ and $j$ is defined as the time during which they are in contact before moving out of the range, that is, $D_{ij} = t - t_e$ with   $\underset{t-t_e}{\operatorname{min}} \{ t: ||L_i(t) - L_j(t)|| > d_{max}, t > t_e\}$, where $t$ and $t_e$ are in the continuous time scale.
\label{def:duration}
\end{definition}

Consider a series of $q$ contact durations $D_{ij}=(D^1_{ij}, \ldots, D^q_{ij})$ between nodes $i$ and $j$ in time frame ${\Delta}t$, then, we can make the following definition:
\begin{definition}
\normalfont The \emph{average contact duration}, denoted by $\bar{D}_{ij}=\frac{\sum_{k=1}^{q}D^k_{ij}}{q}$, is the expected time during which two devices stay within $d_{max}$ before they move apart again once after coming in proximity to one another.
\label{def:avgduration}
\end{definition}

Next, let $G^e=(\mathcal{V,E}^e,\boldsymbol{T})$ be an undirected graph representing the physical encounters of $|\mathcal{V}|$ mobile devices. $\mathcal{E}^e$ is the set of undirected relationships (in this case encounters). Each edge $E^e_i$ has an associated collection of two-dimensional vectors denoted by $\boldsymbol{T}_i=(T_{i_1}, T_{i_2}, ...)$. Each element of in $\boldsymbol{T}_i$ denotes contact time and corresponding duration in ${\Delta}t$ time span, i.e., $T_{i_j} = <t_{uv},D_{uv}>$ for all the $j$ encounters between device $u$ and $v$ in ${\Delta}t$.

\emph{Contact Graph:} The request for a content is generated at time $t$ and $t_c$ is the time required to transmit the content from $s$ to $r$ if they are within range $d_{max}$. We construct an undirected and weighted contact graph $G_p=(\mathcal{V}_p,\mathcal{E}_p,\mathcal{W}_p)$, where $|\mathcal{V}_p|=n$ and $|\mathcal{E}_p|=m$. In doing so, we consider only those encounters in $G^e$ that have \emph{average contact duration} $\bar{D}_{ij}$ sufficiently long enough to cater $t_c$ starting at $t$, i.e., $\bar{D}_{ij} \ge (1+\delta)t_c$ where $\delta \ge 0$ is the predefined stability threshold. $w_{uv} \in \mathcal{W}_p$ is the weight function on each edge $(u,v) \in \mathcal{E}_p$ where $u,v \in \mathcal{V}_p$. 

\emph{Weight Assignment in $G_p$:} Encounters having average contact durations larger than $t_c$ are very good candidates for sustainable D2D transmission. However, considering only the average duration might result in choosing some encounters having a large number of less than $t_c$ duration which will negatively impact the reliable device selection for multi-hop D2D. To account for this in assigning $\mathcal{W}_p$, we will prioritize those edges having encounters with actual duration larger than $t_c$ with more weight. To this end, we define $B_{uv}$, $0 \le B_{uv} \le 1$ that denotes the percentage of times the encounter duration was actually larger than $t_c$. Accordingly, we define the weight $w_{uv} = \rho B_{uv} \cdot \lambda_{uv}+ (1-\rho) \bar{D}_{uv}$ where $\rho$, $0 \le \rho \le 1$ is a predefined weight factor that signifies how much emphasis should be put on the average encounter duration with respect to the percentage of times the encounter duration was actually larger than $t_c$ as denoted by $B_{uv}$. To account for the encounter rate we have multiplied $B_{uv}$ by the weight factor $\lambda_{uv}$, so that the impact of frequent long duration contacts can also be captured in the edge weight. $\lambda_{uv}$ denotes the average number of encounters between $u$ and $v$ over the time period ${\Delta}t$.

Next, we will define a durable community structure that will group devices having similar contact duration together. Such a structure has special properties related to bridge and sustainable edges. In fact, 
an edge $(u,v)$ in $G_p$ is said to be \emph{bridge} edge if it has small percentage of successful contact durations $B_{uv}$ which is reflected by $w_{uv} < \zeta$ where $\zeta$ is the predefined strength threshold. A \emph{sustainable} edge $(u,v)$ is defined to have large percentage of successful contact durations $B_{uv}$ which is reflected by $w_{uv} \ge \zeta$. We denote the weight of sustainable and bridge edges as $w^s_{uv}$ and $w^b_{uv}$, respectively. We leverage these edge weights in deciding the relay devices which we describe in Subsection~\ref{wijassignment}.

Consequently, a \emph{durable community structure}, denoted by $\mathcal{C} = \{ C_1, C_2, \ldots, C_k\}$, is a collection of $k$ subsets of $\mathcal{V}$ satisfying $\cup_{i=1}^{k} C_i = \mathcal{V}$. We say that, a collection of nodes $C_i \in \mathcal{C}$ and its induced subgraph is a \emph{durable community} in $G_p$ if nodes inside $C_i$ are connected primarily through sustainable edges and nodes across communities $C_i$ and $C_j$, if connected, will have bridge edges. Next, we propose an approach to detect durable communities in $G_p$.

\subsubsection{Durable Community Detection}\label{ss:dci}
For a node $u \in \mathcal{V}_p$, let $\mathcal{A}_u$ be the set of neighbors adjacent to $u$. Moreover, let $w_u$ be the weight corresponding to this set. For any $C \subseteq \mathcal{V}_p$, let $C^{in}$ and $C^{out}$ be, respectively, the set of links having both endpoints in $C$ and the set  of links heading out from $C$. Additionally, let $w_C = \sum_{(u,v) \in C^{in}}^{} w_{uv}$, $w_C^{out} = \sum_{(u,v) \in C^{out}}^{} w_{uv}$ and $w_C^+ = w_C + w_C^{out}$. 

Given the contact graph $G_p$, we seek to find a community structure $\mathcal{C} = \{ C_1, C_2, \ldots, C_k\}$ that would strive to group sustainable edges inside a community and place bridge edges across communities. Intuitively, any grouping that maximizes the ratio of sustainable edges to bridge edges inside a community achieves our objective. Thus, we define the \emph{durability} of a community $C$ as $h_C =\frac{w_C}{w_C^+}$, and we formulate the following Durable Community Detection ({\bf DCD}) optimization problem:

\begin{align*}
	\text{maximize }& \mathcal{R}=\sum_{C\in \mathcal{C}} h_C=\sum_{C\in \mathcal{C}} \frac{w_C}{w_C^+},\\    
	\text{s.t.	}&C_i \cap C_j  = \emptyset  &\forall i,j \in \{1,2,\ldots,k\},\\
		   & \bigcup_{i=1}^{k} C_i = \mathcal{V}_p
	\end{align*}

In this formulation, the number of communities $k$ is determined by optimizing the objective function $\mathcal{R}$ and is not an input parameter. Next, we show the following properties of network communities identified by optimizing our suggested metric $\mathcal{R}$: (i) links within a community have high durability contribution and (ii) links connecting communities have low durability contribution. 

\begin{proposition}
Let $\mathcal{C}=\{ C_1, C_2, \ldots, C_k\}$ be a community structure detected by optimizing $\mathcal{R}$, links within each $C_i$ are of strong durability contribution while those connecting communities are of weak durability contribution. 
\label{pro:dcd}
\end{proposition}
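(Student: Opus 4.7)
The plan is to argue by contradiction, exploiting the local optimality of any community structure $\mathcal{C}^*=\{C_1,\ldots,C_k\}$ that maximizes $\mathcal{R}$. The claim decomposes into two dual statements: (i) no sustainable edge is left between distinct communities, i.e., for every sustainable $(u,v)$ with $w_{uv}^s\ge\zeta$ one has $u,v\in C_i$ for the same $i$; and (ii) no bridge edge is buried deep inside a single community, i.e., bridge edges are concentrated on the cuts $C^{out}$. Showing each of (i) and (ii) as a local-improvement argument will prove the proposition.

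For statement (i), I would suppose there exists a sustainable edge $(u,v)$ with $u\in C_i$, $v\in C_j$, $i\neq j$, and consider the perturbation $\mathcal{C}'$ obtained by moving $v$ from $C_j$ to $C_i$. Let $a_v^{i}$, $a_v^{j}$ and $a_v^{\mathrm{rest}}$ denote the total edge weight from $v$ into $C_i$, into $C_j\setminus\{v\}$, and into all other communities respectively. Then after the move,
\begin{align*}
w_{C_i}'  &= w_{C_i}+a_v^{i},\qquad  w_{C_i}'^{+} = w_{C_i}^{+}+a_v^{j}+a_v^{\mathrm{rest}},\\
w_{C_j}'  &= w_{C_j}-a_v^{j},\qquad  w_{C_j}'^{+} = w_{C_j}^{+}-a_v^{i}-a_v^{\mathrm{rest}}.
\end{align*}
I would expand $\Delta\mathcal{R}=h_{C_i}'+h_{C_j}'-h_{C_i}-h_{C_j}$, clear denominators, and show that the sustainable term $a_v^{i}\ge w_{uv}^s\ge\zeta$ strictly dominates the contributions of the (predominantly bridge-weighted) quantities $a_v^{j}$ and $a_v^{\mathrm{rest}}$, so that $\Delta\mathcal{R}>0$, contradicting optimality.

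For statement (ii), I would assume some bridge edge $(u,v)$ with $w_{uv}^b<\zeta$ lies inside $C_i$, and exhibit a bipartition of $C_i$ into $U\ni u$ and $V\ni v$ that increases $\mathcal{R}$. Intuitively the natural cut follows the sustainable-edge connectivity of $C_i$, placing $u$ and $v$ on opposite sides so that bridge edges like $(u,v)$ migrate from $C_i^{in}$ to $U^{out}\cup V^{out}$. Writing $h_U+h_V-h_{C_i}$ and using $w_{C_i}=w_U+w_V+(\text{bridge weight across the cut})$ together with $w_U^{+}=w_U+w_U^{out}$, the dominant positive contribution is $w_U/w_U^{+}+w_V/w_V^{+}$ versus the old $w_{C_i}/w_{C_i}^{+}$; since the cut edges are bridge, their removal from the numerator of $h_{C_i}$ costs strictly less (per unit weight) than what is gained by the cleaner sustainable-edge ratios on $U$ and $V$, again yielding a strict increase in $\mathcal{R}$.

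The main obstacle is turning these qualitative intuitions into a clean algebraic inequality. The ratios $h_C=w_C/w_C^{+}$ couple numerator and denominator, so a naive bound can fail near degenerate regimes (e.g., an almost-isolated community, or a community dominated by external bridge weight). To handle this I would (a) parameterize the move/split so that the chosen $v$ (respectively cut) maximizes the weight differential between sustainable and bridge contributions, and (b) use the threshold $\zeta$ as a separator: any sustainable edge contributes at least $\zeta$ to the numerator it enters, whereas bridge edges contribute at most $\zeta$, which is exactly what forces the sign of $\Delta\mathcal{R}$. If an absolute improvement cannot be guaranteed in every pathological instance, I would fall back on the weaker but sufficient majority formulation already suggested by Proposition~\ref{pro:dcd}'s wording, namely that intra-community edges carry strong durability contribution on aggregate while inter-community edges do not, which the same exchange arguments establish in expectation over the optimal $\mathcal{C}^*$.
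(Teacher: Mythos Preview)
Your proposal misreads what the proposition is asserting. You interpret ``strong durability contribution'' and ``weak durability contribution'' as the sustainable/bridge classification via the threshold $\zeta$, and then try to show that in an optimal $\mathcal{C}$ no sustainable edge crosses communities and no bridge edge lies inside one. That stronger statement is not what is being claimed, and it is not true in general: take a path $a\!-\!b\!-\!c\!-\!d$ with all four edge weights equal and at least $\zeta$; then $\{a,b\},\{c,d\}$ attains $\mathcal{R}=1$, tied with the single community, so an optimizer can leave the sustainable edge $(b,c)$ between communities. Your exchange arguments (moving $v$ across, or splitting $C_i$) therefore cannot yield a strict improvement in all cases, which is exactly the ``degenerate regime'' difficulty you flag but do not resolve. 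The threshold $\zeta$ plays no role in the argument the paper has in mind.

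The paper's proof is both simpler and aimed at a weaker, purely relative conclusion. It uses only single-node add/remove optimality: for $v\notin C$, optimality of $\mathcal{C}$ forces
\[
\frac{w_C}{w_C^+}\;\ge\;\frac{w_C+w_{v,C}}{w_C^++w_v},
\]
which rearranges to $w_{v,C}/w_v\le h_C$; and for $u\in C$, the analogous removal inequality gives $w_{u,C}/w_u\ge h_C$. ``Strong'' and ``weak'' durability contribution thus mean that the fraction of a node's incident weight landing inside its own community exceeds $h_C$, while the fraction of an outsider's weight landing in $C$ does not. No bipartition of $C_i$, no move between two communities, and no appeal to $\zeta$ is needed; the whole proof is two one-line manipulations of the mediant-type inequality for $h_C$. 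Your fallback ``aggregate'' formulation at the end is in the right spirit, but the clean way to realize it is precisely this node-level ratio comparison rather than the edge-level $\zeta$ dichotomy.
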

\begin{proof}[Proof]
For any node $u \in \mathcal{V}_p$ and subset $S \subseteq \mathcal{V}_p$, let $w_{u,S}$ be the total weight
of all links that $u$ has towards $S$ and vice versa. By this definition, we obtain $w_u = w_{u,S} + w_{u,\mathcal{V}_p\backslash S}$. 

Consider a community $C \in \mathcal{C}$, $u \in C$ and $v \notin C$. Since $v$ is not a member of $C$, we have
\begin{align*}
	\frac{w_C}{w_C^+} > \frac{w_C+w_{v,C}}{w_C^++w_v} = \frac{w_C+w_{v,C}}{w_C^++w_{v,C}+w_{v,\mathcal{V}\backslash C}},
\end{align*}
because otherwise adding $v$ to $C$ will give a better value of $\mathcal{R}$. This equality results in 
\begin{align*}
	\frac{w_{v,C}}{w_v} < \frac{w_C}{w_C^+},
\end{align*}
which, in turn, implies that the links joining $v$ to $C$ are insignificant in terms of durability contribution with respect to the total weight of $C$ as a whole.

Similarly, for any node $u \in C$, we have 
\begin{align*}
	\frac{w_C}{w_C^+} > \frac{w_C-w_{u,C}}{w_C^+-w_u} = \frac{w_C-w_{u,C}}{w_C^+-w_{u,C}+w_{u,\mathcal{V}\backslash C}},
\end{align*}
because otherwise excluding $u$ from $C$ will give a better estimation of $\mathcal{R}$. This inequality simplifies to
\begin{align*}
	\frac{w_{u,C}}{w_u} > \frac{w_C}{w_C^+},
\end{align*}
which shows that the links joining $u$ to $C$ are of significant weight having larger durability contribution in comparison to the total internal weight of $C$. 
\end{proof}
\subsubsection{A greedy algorithm for DCD problem}
Solving the DCD problem is NP-hard as shown by a similar reduction to modularity as in \cite{brandes2008modularity}. Consequently, a heuristic approach that can provide a good solution in a timely manner is  more desirable. In this regard, we propose a greedy algorithm for the DCD problem consisting of three phases, shown in Alg. \ref{algdcd}.

The first phase, referred to as the \emph{development phase}, identifies raw communities  in the input network. Initially, all nodes are unassigned and do not belong to any community. Next, a random node is selected as the first member of a new community $C$, and consequently, new members who help to maximize $C$'s durability, $h_C$, are gradually added into $C$. When there is no more node that can improve this objective of the current community, another new community is formed and the whole process is then continued in the very same manner on this newly formed community.

Next, the \emph{augmentation phase} rearranges nodes into more appropriate communities. In the first phase, new members are added into a community $C$ in a random order. Therefore, $C$'s objective value $h_C$ can further be improved if some of its members, that reduce the total durability, are excluded. Such nodes then form singleton communities. This step requires the re-evaluation of all $C$'s members as a result. The removal of such nodes creates more cohesive communities having higher internal connectedness.

In the last phase, the \emph{refinement phase}, global stability of the whole network is re-estimated. This phase looks at the merging of two adjacent communities in order to improve the overall objective function. If two communities have a large number of mutual connections between them, it is thus more durable to combine them into one community.

\begin{algorithm}[t]
\footnotesize
  \caption{DCD algorithm}
  \KwData{Network $G_p = (V_p,E_p,W_p)$}
  \KwResult{Durable community structure $\mathcal{C}$}
  {\bf \emph{Phase I: Development Phase.}}\;\\
  Initialize $\mathcal{C} \leftarrow \emptyset$  \;\\
  Initialize $Q \leftarrow \mathcal{V}_p$  \;\\  
  \While{$\exists \text{ unassigned node } x \in Q$}
  {
    $ C \leftarrow \{x\}$\;\\
    $ Q \leftarrow Q \backslash \{x\}$    \;\\
    \While{$ \exists y \in Q \text{ such that } h_{C \cup \{y\}} > h_C$}
	{
		$y \leftarrow \underset{y \in Q}{\operatorname{argmax}} \{h_{C \cup \{y\}}\}$ 		\;\\
	    $ C \leftarrow C \cup \{y\}$\;\\
	    $ Q \leftarrow Q \backslash \{y\}$    
	}    
	$\mathcal{C} \leftarrow \mathcal{C} \cup \{C\}$
  }
  {\bf \emph{Phase II: Augmentation Phase.}}\;\\
  \For{$C \in \mathcal{C}$}
  {
	\While{$ \exists x \in C \text{ such that }  h_{C \backslash \{x\}} > h_C$}
	{ 
		$ C \leftarrow C \backslash \{x\}$\;\\
	    $\mathcal{C} \leftarrow \mathcal{C} \cup \{x\}$    
	}
  }
  {\bf \emph{Phase II: Refinement Phase.}}\;\\
  \While{$ \exists C_1,C_2 \text{ such that } h_{C_1 \cup C_2} > h_{C_1} + h_{C_2}$}
  { 
  	$(C_1,C_2) \leftarrow \underset{C_1,C_2 \in \mathcal{C}} {\operatorname{argmax}} \{h_{C_1 \cup C_2} - h_{C_1} - h_{C_2}\}$\;\\
  	$\mathcal{C} \leftarrow (\mathcal{C} \backslash \{C_1,C_2\}) \cup \{C_1 \cup C_2\}$
  }  
  Return $\mathcal{C}$
  \label{algdcd}
\end{algorithm}

The run time complexity of the \emph{development} and \emph{augmentation} phases are $\mathcal{O}(nm)$. Moreover, even though the refinement phase might take $\mathcal{O}(n^3m)$ time in the worst case scenario, we have found that the DCD algorithm computes the durable communities within milliseconds even for networks containing hundreds of nodes as reported in Table \ref{table:runtime}. Since the optimal solution takes exponential time for larger instances of the network, we use smaller values of $n$ in order to obtain results for optimal solution for comparing with the running time of DCD. We formulated the DCD problem as an integer program with quadratic constraints and objective function and solved it using CPLEX \cite{cplex} to obtain the result for optimal solution. We have reported the results of run time comparison in the Table \ref{table:time}. Clearly, the run time complexity of the optimal algorithm increases exponentially as the number of devices increases in the network, whereas DCD takes only a small amount of time on all of those cases which makes DCD suitable for real-time relay selection.

\begin{center}
\begin{table}[!htb]
\centering
	\caption{Running times in seconds for DCD} 
	\begin{tabular}{*7c}
    \toprule
	    Method &  \multicolumn{6}{c}{User count (n)}   \\
        {} & 20 &50&80& 110 & 140 & 170 \\
        \midrule
         DCD & 0.006 & 0.022 & 0.05 & 0.018 &0.27 & 0.84\\ 		
	\bottomrule
	\end{tabular}
    \vspace{-8pt}
	\label{table:runtime}
\end{table}
\end{center}

\begin{center}
\begin{table}[!htb]
\centering
    \caption{Comparison of running times in seconds} 
	\begin{tabular}{*6c}
    \toprule
	    Method &  \multicolumn{5}{c}{User count (n)}   \\
        {} & 10 &15&20& 25 & 30 \\
        \midrule
         DCD & 0.006 & 0.005 & 0.006 &0.005 & 0.009\\ 		
         Optimal & 1.68 & 6.31 & 422.73 & 1465 & 2970\\
	\bottomrule
	\end{tabular}
    \vspace{-8pt}
	\label{table:time}	    
\end{table}
\end{center}

\section{Cost-effective Device Selection}\label{label:lcr}
Once content request is generated at time $t$, the BS initiates a centralized process that encompasses two tasks. First, it constructs $G_p$ and finds out durable communities as described in previous section. In the second step, the BS selects a set of devices to solve the {\bf \emph{CEDS-MD}} problem defined in Problem \ref{def:lcdtA}. To ensure high likelihood of the successful delivery of content through D2D, the BS incorporates the social encounter based community information as described subsequently. 

\subsection{Relay Graph Construction}\label{label:rgc}

The BS initiates the second step for device selection by constructing a \emph{multi-hop D2D graph} $G_r=(\mathcal{V}_r,\mathcal{E}_r, \mathcal{W}_c, \mathcal{W}_e)$ where $\mathcal{V}_r$ is the set of devices present at time $t$. $\mathcal{W}_c$ denotes the BS cost, for any $(i,j) \in \mathcal{E}_r$, $c_{i,j}$ indicates how much incentive BS has to spend in order to make device $i$ agree to share its resources with device $j$ for relay purpose as defined in \eqref{equ:weightNode}. We put an edge between two devices $i$ and $j$ if and only if the distance between them is within the D2D communication range, that is, the SINR from $i$ to $j$ is above a certain threshold as determined by the BS. Here, the BS is also considered as part of the graph where it is represented by a vertex. The edge connecting the BS and each device has a cost that pertains to the physical channel condition between them. Since a transmitting device in a D2D pair with better channel condition is preferred from the BS's point of view, the BS will pay a higher incentive and thus, it incurs more cost to the BS which is captured in equation \eqref{equ:weightNode}. In contrast, for a direct BS to device connection,  a receiving device having better channel condition with the BS will require less physical resource blocks for the transmission which will result in a smaller B2D cost.  The BS will have to use a relatively large number of resource blocks to transmit the content within $t_{max}$ to a device which is far away from it which is essentially a device experiencing poor channel condition at the BS. Consequently, the cost for BS to that device, termed as \emph{B2D} cost, will be naturally higher than a device with better channel condition. In summary, the B2D cost can be defined to be inversely proportional to the radio channel condition from the BS to that device as denoted below.
\begin{equation}
 c_{BS,j}= \frac{K}{p_{BS,j}} = K \times \{p_{BS} . (d_{BS,j})^{-\alpha} . |m_0|^2 . 10^{\psi/10}\}^{-1}.
 \label{equ:BSweight}
\end{equation}

A device located closer to the BS essentially experiences better channel condition at the BS and incurs less B2D cost to receive the content. The inverse of the numerical value of the received signal at device $j$ from the BS, denoted by $p_{BS,j}$ is a large number; the constant $K<1$ is thus required to normalize the cost so that the B2D cost is in the same scale with the multi-hop D2D cost. To account for the mobility  of the devices on the multi-hop path, i.e., increasing the likelihood of successful content delivery, we resort on identified \emph{durable communities} for the assignment of edge weight $\mathcal{W}_e$ described below. 

\subsection{$\mathcal{W}_e$ weight assignment in $G_r$}\label{wijassignment} Since the durable communities are constructed based on physical encounter history, users belonging to the same community have strong connections internally that not only help in reliable content transfer but also lay the basic foundation for stable and sustainable encounter predictions. The  BS follows specific rules in order to assign proper edge weights $W_{ij}$ between two devices $i$ and $j$ who are within $d_{max}$ in $G_r$ according to their membership in the durable communities obtained from the contact graph $G_p$. (i) Devices belonging to same community as well as connected via sustainable edge will have small weight that is inversely proportional to the total internal edge weight of that community. (ii) Devices belonging to same community but either connected with a bridge edge in $G_p$ or without any edge in $G_p$ will have larger weight in $G_r$ compared to case (i). (iii) If devices belong to different communities $C_i$ and $C_j$ and there is no edge connecting them in $G_p$ or the edge connecting them is a bridge edge in $G_p$, the edge connecting them in $G_r$ will have large weight that is inversely proportional to the weight of the edge bearing minimum weight among all edges connecting $C_i$ and $C_j$ in $G_p$. If there is no edge connecting $C_i$ and $C_j$ in $G_p$, we assign $W_{ij}$ the value which is the maximum weight between any two devices in $G_r$. (iv) If devices belong to different communities $C_i$ and $C_j$ and a sustainable edge connects them in $G_p$, the edge weight $W_{ij}$ between $i$ and $j$ in $G_r$ will be smaller than that of case (iii). According to these four criteria, edge weights are assigned between adjacent devices (within $d_{max}$) in $G_r$ which help our proposed solution RPF to choose suitable relay devices for multi-hop content transfer as we will demonstrate in the performance evaluation section.

\subsection{Social Community Aware Device Selection for Multi-hop D2D}\label{label:lcde}

The goal is to find a least cost path from $s$ to $r$ in relay graph $G_r$ within practical constraints of maximum delivery time imposed as part of latency which puts a limit on the number of relay devices. At the same time, we emphasis on the importance of incorporating durable communities into decision making process of device selection for successful D2D session completion. To take this into account, we modify the cost of the path $P$ in Problem \ref{def:lcdtA} as part of our solution to \emph{CEDS-MD}. Accordingly, we include the edge weight $W_{ij}$ that was computed in Section \ref{wijassignment}, to obtain the total cost $w_{ij}$ between $i$ and $j$ as follows:
\begin{equation}
w_{ij} = W_{ij} + c_{i,j}.
\label{equ:weightR}
\end{equation}

Note that, both the terms in the right hand side of \eqref{equ:weightR} are normalized and of the same order of magnitude. For real-time content sharing with D2D communication, we can formulate the optimal relay selection problem in multi-hop D2D cellular network as the following optimization problem. Let the variable $x_{ij}$ represent each edge $(i,j) \in E_r$:
\begin{equation}
    x_{ij}=\begin{cases}
    1, & \text{if $e(i,j)$ is selected for least cost feasible path}.\\
    0, & \text{otherwise}.
  \end{cases}
\label{z_i}  
\end{equation}

\vspace{1.0cm}
We have the following Integer Program (IP):

\begin{eqnarray}
	\min \sum_{(i,j)\in E} w_{ij}x_{ij}&&\label{objective}\\
    s.t.	\sum_{(i,j) \in E}f_{ij} - \sum_{(k,i) \in E}f_{ki} & = &
		    \left\{ 
		    \begin{array}{l l}
		        1  &\quad i=s,\\
		        -1 & \quad i=r,\\
		        0 & \quad \forall i\in V, i\neq s,t,\\
		    \end{array} 
		    \right. \label{flowbalance}\\
            \sum_{(i,j)\in E} t_{i,j}x_{ij}&\leq& t_{max},\label{timeconst}\\
            x_{ij}&\in& \{0,1\}, \quad \forall (i,j)\in E .\label{binary}
\label{equ:ip}            
\end{eqnarray}
\eqref{flowbalance}  ensures that the selected cost-effective devices constitute a path. The time for transmission between devices $i$ and $j$ is obtained considering cellular and the wireless channel as in (\ref{tij}). \eqref{timeconst} makes sure that the selected devices deliver the time-sensitive content within the maximum allowable time $t_{max}$ with high likelihood. This optimization problem is NP-complete since it belongs to a class of combinatorial optimization \cite{crowcroft96}. Therefore, we cannot derive the optimal solution in polynomial time. Next, we introduce the proposed approach to solve the CEDS-MD problem.

\subsection{Solving the Optimization Problem}

We solve the CEDS-MD problem in three steps: (i) relax the IP formulation into a linear program (LP) and solve it, (ii) show that the optimal solution of the LP has at most two fractional paths that will be constructed  and (iii) formulate a new LP by adding new constraints. Then, we keep solving the modified LP until it becomes infeasible. This approach obtains the optimal solution in near polynomial time  by using interior point method in solving the LP. We start by relaxing (\ref{equ:ip}) to obtain the LP:
\begin{eqnarray}
	\min \sum_{(i,j)\in E} w_{ij}x_{ij}&&\label{objectiveL}\\
    s.t.	\sum_{(i,j) \in E}f_{ij} - \sum_{(k,i) \in E}f_{ki} & = &
		    \left\{ 
		    \begin{array}{l l}
		        1  &\quad i=s,\\
		        -1 & \quad i=r,\\
		        0 & \quad \forall i\in V, i\neq s,t,\\
		    \end{array} 
		    \right. \label{flowbalanceL}\\
            \sum_{(i,j)\in E} t_{i,j}x_{ij}&\leq& t_{max},\label{timeconstL}\\
			0 &\le& x_{ij} \le 1 \quad\forall (i,j)\in E. \label{binaryL}
\end{eqnarray}

\subsubsection{Property of LP Solution}
We denote the LP relaxation of \eqref{binaryL} as $\mathcal{P}$. The optimal solution of $\mathcal{P}$ is no longer integral as in the classical shortest path problem\cite{Ahuja88}, due to the addition of constraint \eqref{timeconst}. However, the following theorem holds true. 
\begin{theorem}\label{theorem1}
	There exists either an optimal solution for $\mathcal{P}$ that contains at most two fractional $s,r$ paths or $\mathcal{P}$ is infeasible. 
\end{theorem}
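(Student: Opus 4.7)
The plan is to exploit the fact that $\mathcal{P}$ has only one side constraint beyond the flow conservation equalities, so that the effective LP in path space is two-dimensional and admits a basic feasible solution supported on at most two paths.

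First I would dispense with the trivial case: if $\mathcal{P}$ is infeasible the statement holds, so assume $\mathcal{P}$ has at least one feasible point. Since the feasible region is nonempty, bounded (all $x_{ij}\in[0,1]$), and the objective is linear, an optimal solution $x^*$ exists. Interpreting $x^*$ as a unit of flow routed from $s$ to $r$, I would invoke the standard flow decomposition theorem to write $x^*=\sum_{i=1}^m \lambda_i \mathbf{1}_{P_i}+\sum_j \mu_j \mathbf{1}_{Q_j}$ where $P_i$ are $s$--$r$ paths, $Q_j$ are directed cycles, $\lambda_i,\mu_j\geq 0$, and $\sum_i\lambda_i=1$. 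Since all edge costs $w_{ij}$ and delays $t_{i,j}$ are nonnegative, removing any cycle strictly decreases (or leaves unchanged) the objective while preserving flow conservation and only tightening the delay constraint. Hence at the optimum we may assume $\mu_j=0$ for all $j$, and $x^*=\sum_{i=1}^m \lambda_i \mathbf{1}_{P_i}$ is a convex combination of $s$--$r$ paths.

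Next I would consider the auxiliary LP restricted to the paths $P_1,\dots,P_m$ appearing in this decomposition:
\begin{align*}
\min \; & \sum_{i=1}^m c(P_i)\,\lambda_i\\
\text{s.t. } & \sum_{i=1}^m t(P_i)\,\lambda_i \le t_{max},\\
& \sum_{i=1}^m \lambda_i = 1,\\
& \lambda_i \ge 0, \quad i=1,\ldots,m,
\end{align*}
where $c(P_i)=\sum_{(i,j)\in P_i}w_{ij}$ and $t(P_i)=\sum_{(i,j)\in P_i}t_{i,j}$. This LP is feasible (our $\lambda$'s are feasible) and bounded below by the optimum of $\mathcal{P}$. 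Because it has only two nontrivial constraints, every basic feasible solution has at most two strictly positive components. Moreover, any feasible $\lambda$ here lifts to a feasible solution of $\mathcal{P}$ with objective value exactly $\sum_i c(P_i)\lambda_i$. Hence taking an optimal BFS $\lambda^\star$ of the auxiliary LP yields a feasible $\tilde x=\sum_i \lambda^\star_i \mathbf{1}_{P_i}$ for $\mathcal{P}$ whose objective is no larger than that of $x^*$, and which is therefore optimal; by construction $\tilde x$ is supported on at most two $s$--$r$ paths.

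The main obstacle I anticipate is simply the justification that cycles in the flow decomposition can be discarded without loss of optimality; this rests on the nonnegativity of $w_{ij}$ (true by \eqref{equ:weightNode} and \eqref{equ:weightR}) and of $t_{i,j}$ (true by \eqref{tij}), so the argument above goes through cleanly. A minor caveat is the interpretation of ``fractional path'' when one of the two $\lambda^\star_i$ equals $1$: in that case $\tilde x$ is a single integral path, which vacuously satisfies the bound, so the statement remains valid.
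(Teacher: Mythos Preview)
Your argument is correct and takes a genuinely different route from the paper. The paper proceeds by a direct case analysis: assuming an optimal solution uses more than two fractional $s$--$r$ paths, it splits according to how many of them are ``short-delay'' (individually satisfying the time bound) versus ``long-delay'', and in each case constructs an explicit flow shift between two of the paths that eliminates one without increasing the objective. The core of that argument is an algebraic computation showing that, given two short-delay paths $p_1,p_2$ and an aggregate $p_a$ of the remaining paths, at least one of the two obvious ``drop one short path and rebalance onto the others'' moves cannot worsen the objective; the contrary assumption leads to a pair of incompatible inequalities.

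Your approach bypasses all the casework by passing to path space via flow decomposition and invoking the standard LP fact that a basic feasible solution of a polytope with two nontrivial constraints (the simplex equality $\sum_i\lambda_i=1$ and the single delay inequality) has support at most two. This is cleaner, more conceptual, and immediately generalizes: with $q$ additional side constraints one would get at most $q+1$ paths. The paper's argument is more elementary in that it avoids citing flow decomposition and vertex theory, but it is longer and more ad hoc. One small point worth making explicit in your write-up: when you lift $\lambda^\star$ back to $\tilde x=\sum_i\lambda_i^\star\mathbf{1}_{P_i}$, the upper bounds $x_{ij}\le 1$ hold even on shared edges because $\sum_i\lambda_i^\star=1$; this is implicit in your argument but deserves a sentence.
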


\begin{proof}
	Denote $P_{sr}$ as the collection of all $s,r$ paths. Denote $w_{p_j},t(p_j)$ as the total weight and total delay of a path $ p_j\in P_{sr}$, respectively. $p_j$ is called a long-delay path if $t(p_j)>t_{max}$ and is called a short-delay path otherwise. 
    
    We will show that if $\mathcal{P}$ is feasible and an optimal solution $\textbf{x}^{*}$ contains more than two fractional $s,r$ paths, then either $\textbf{x}^{*}$ can be transformed to an optimal solution with at most two $s,r$ paths or $\textbf{x}^{*}$ is not optimal. Assume $\textbf{x}^{*}$ contains $k>2$ fractional paths and is optimal. It is clear that some short-delay paths must be included, otherwise $\textbf{x}^{*}$ is not even feasible. Therefore, the problem can be categorized into three cases: i) all paths are short-delay paths, ii) at least two short-delay paths and a long-delay path exist and iii) at least two long-delay paths and a short-delay path exist. 
    
    In the first case, if all the short-delay paths selected have the same weight, an equivalent solution can be constructed by assigning flow of 1 to one of the selected paths and flow of 0 to all the others. Such an optimal solution has only one path. If the weight of the selected paths are different, by shifting the flow from heavy-weight paths to light-weight paths can improve the solution and hence, $\textbf{x}^{*}$ is not optimal.
    
    In the second and the third case, the weight of long-delay paths must be smaller than short-delay paths or we can shift the flow to short-delay paths and improve the solution. Denote the collection of all selected paths as $P_{\textbf{x}^{*}}$, we must have
		$\sum_{p_j\in P_{\textbf{x}^{*}}}f_j t(p_j)=t_{max},$
    where $f_j$ is the flow assigned to path $p_j$. If the total time is less than $t_{max}$, it is possible to shift flows from short-delay paths to long-delay paths and improve the solution. 
	In the second case, denote $p_1,p_2$ as two short-delay paths. Also, let $p_a$ as a representation of all other selected paths, where 
    \begin{eqnarray*}
    	f_a&=&\sum_{p_j\in P_{\textbf{x}^{*}},j\neq 1,2}f_j,\\
		t(p_a)&=&\frac{\sum_{p_j\in P_{\textbf{x}^{*}},j\neq 1,2}f_j t(p_j)}{f_a},\\
        w(p_a)&=&\frac{\sum_{p_j\in P_{\textbf{x}^{*}},j\neq 1,2}f_j w(p_j)}{f_a}.
	\end{eqnarray*}
  Clearly,
		$f_a t(p_a) + f_1 t(p_1) + f_2 t(p_2)=t_{max},$\\
        $\text{ }\text{ }\text{ }\text{ }\text{ }\text{ }\text{ }\text{ }\text{ }
        \text{ }f_a w(p_a) + f_1 w(p_1) + f_2 w(p_2) = Y^*,$\\
    where $Y^*$ denotes the objective value of solution $\textbf{x}^{*}$. Also, we have $t(p_a)>t_{max}, w(p_a)<w(p_1), w(p_2)$.
    
    Without loss of generality, let $t(p_1)<t(p_2)$, then $w(p_1)>w(p_2)$ or $p_1$, $p_2$ cannot coexist in the optimal solution. Consider two moves: (1) Remove $p_2$ from the optimal solution. (2) Remove $p_1$ from the optimal solution. For both moves, the solutions are recalculated by assigning flows to the remaining selected paths. Denote the objective value by $Y^1,Y^2$ for move (1) and (2) respectively. We will show that it is impossible to have both $Y^1,Y^2 \leq Y^*$ and $Y^*$ is not an optimal solution. 
    
    After move (1), the following formulas hold.    
    \begin{eqnarray*}
		t_{max}&=&(f_a + \delta_1) t(p_a) + (f_1+f_2-\delta_1) t(p_1),\\
        Y^1&=&(f_a + \delta_1) w(p_a) + (f_1+f_2-\delta_1) w(p_1),\\
        \delta_1&=& f_2\frac{t(p_2)-t(p_1)}{t(p_a)-t(p_1)}.            
	\end{eqnarray*}
    Therefore, 
    \begin{eqnarray*}
		\Delta_1 = Y^1-Y^*=f_2(w(p_1)-w(p_2))+\delta_1(w(p_a)-w(p_1))\\
        =f_2((w(p_1)-w(p_2))+\frac{t(p_2)-t(p_1)}{t(p_a)-t(p_1)}(w(p_a)-w(p_1))).
	\end{eqnarray*}

    After move (2), the following formulas hold. 
    \begin{eqnarray*}
		t_{max}&=&(f_a - \delta_2) t(p_a) + (f_1+f_2+\delta_2) t(p_2),\\
        Y^2&=&(f_a - \delta_2) w(p_a) + (f_1+f_2+\delta_2) w(p_2),\\
        \delta_2&=& f_1\frac{t(p_2)-t(p_1)}{t(p_a)-t(p_2)}.
	\end{eqnarray*}
    Therefore, 
    \begin{eqnarray*}
		\Delta_2 = Y^2-Y^*=f_1(w(p_2)-w(p_1))-\delta_2(w(p_a)-w(p_2))\\
         =f_1((w(p_2)-w(p_1))+\frac{t(p_2)-t(p_1)}{t(p_a)-t(p_2)}(w(p_2)-w(p_a))).
	\end{eqnarray*}
    Assume $\Delta_1, \Delta_2>0$, since $f_1,f_2>0$, we have
    \begin{eqnarray}
		\frac{w(p_1)-w(p_2)}{w(p_1)-w(p_a)}&>&\frac{t(p_2)-t(p_1)}{t(p_a)-t(p_1)},\label{move1}\\
        \frac{w(p_1)-w(p_2)}{w(p_2)-w(p_a)}&<&\frac{t(p_2)-t(p_1)}{t(p_a)-t(p_2)}.\label{move2}
	\end{eqnarray}

    However, inequality \eqref{move1}, \eqref{move2} cannot both hold simultaneously. To see it clearly, let 
    \begin{eqnarray}
		a=w(p_1)-w(p_2),\quad b=w(p_2)-w(p_a),\\
        c=t(p_a)-t(p_2), \quad d=t(p_2)-t(p_1).
	\end{eqnarray}
	Then, inequality \eqref{move1} reduces to $\frac{a}{a+b}>\frac{d}{c+d}$, while inequality \eqref{move2} reduces to $\frac{a}{b}<\frac{d}{c}$. The first one implies $ac>bd$ while the second one implies $ac<bd$.
    
    Therefore, in the second case in which there exists two or more short-delay paths in the solution, we can always perform move (1) or (2) to reduce number of short-delay paths without increasing objective value. The same claim holds true for the third case with a similar reasoning. 
    
    In conclusion, we can always create an optimal solution for $\mathcal{P}$ while selecting at most two $s,r$ paths.
\end{proof}

\subsection{Exact Solution by Cutting Plane}
Based on Theorem \ref{theorem1}, an optimal solution with at most two fractional paths can always be generated by solving $\mathcal{P}$. The case of only one path is trivial since it is already the optimal integral solution and no further work is required. Therefore, we are only interested in solutions with two fractional paths. Clearly, the two paths must be one short-delay path, denoted as $p_s$ and one long-delay path, denoted as $p_l$. Since any feasible integral solution must be a short-delay path, we are particularly interested in $p_s$. Denote $X_{p_s}=\sum_{(i,j)\in p_s}\bar{x}_{ij}$, where $\bar{x}_{ij}$ is the value of $x_{ij}$ in the current solution. If we cut the path $p_s$ out of the feasible region of $\mathcal{P}$, the solution must explore other paths by adding the following constraint
	\begin{eqnarray}
		\sum_{(i,j)\in p_s}x_{ij}<X_{p_s} \label{cuttingplane}
	\end{eqnarray}
    
    By resolving $\mathcal{P}$ iteratively while updating the constraint \eqref{cuttingplane}, the feasible region of $\mathcal{P}$ is gradually decreased. We continue the iteration until it is infeasible.
   
The optimal solution will then be the short-delay path with minimum weight. The final algorithm, which we call relay path finder ({\bf RPF}), is presented in Alg. \ref{algRPF}.

\begin{algorithm}[t]
\footnotesize
  \caption{RPF: An optimal algorithm for finding least cost relay path}
  \KwData{Network $G_r = (V_r,E_r,W_v,W_e)$, source $s$, target $r$ and $t_{max}$}
  \KwResult{A path comprising a set $S$ of edges forming the relay}
  Initialize $Q \leftarrow \emptyset$  \;\\
  Solve the LP in \eqref{objective}-\eqref{binaryL} \;\\  
  $P \leftarrow \text{solution of LP}$\; \\  
  \While{$P \text{ is feasible}$}
  {
    $ F \leftarrow \text{ \{construct feasible path(s) in P\} }$\;\\
	$Q \leftarrow Q \cup\text{ \{short delay path in } F\}$ \;\\ 
	\If{$F \text{ contains only one path from s to r}$}
	{
	    Return the path in $Q$ with smallest weight
	}	  	
	\ElseIf{$F \text{ contains two paths from s to r}$}
	{
		Let $p_s$ and $p_l$ be the paths\;\\
		Add constraint according to \eqref{cuttingplane} to the LP 
	}
	Solve the LP with the additional constraint\;\\
	$P \leftarrow \text{solution of the updated LP}$
  }
  \If{$P \text{ is infeasible \&\& } Q = \emptyset $}
  {
      No feasible path exists\;\\	  
  	  Initiate direct cellular communication between BS and $r$
  }
  \ElseIf{$Q <> \emptyset$ \&\& $\{\exists P'\in Q | C(P') < C(B2D)\}$}
  {
 	  Return the path in $Q$ with smallest weight and cost $<$ B2D
  }  
  \Else
  {
  	Initiate direct cellular communication between BS and $r$
  }

  \label{algRPF}
\end{algorithm}

\section{Performance Evaluation}\label{label:experiment}
For our simulations, the mobility trace for nodes is generated by self-similar least action walk model (SLAW) which is shown to be very realistic in capturing user mobility \cite{slaw}. In particular, SLAW
generated traces are shown to be effective in representing social contexts present among people sharing common interests or those in a single community such as university campus, companies and theme parks. In human mobility, people strive to reduce the distance of travel by visiting all the nearby destinations before visiting farther destinations unless some high priority events such as appointments force them to make a long distance trip even in the presence of unvisited nearby destinations. SLAW leverages this self-similarity of fractal waypoints, which can be viewed as destinations, to realistically predict the human mobility. In this paper, we have used the similar parameter settings for capturing this regularity in human mobility patterns which are also suggested in the original paper \cite{slaw}. The wireless propagation channel is modeled for urban macrocell scenarios with shadowing component set to having standard deviation of $12$ dB and path loss exponent $\alpha$ set to $3$. The cell area is set up as a $1$~km $\times$ $1$~km square with the BS at its center. The noise spectral density is $-174$~dBm/Hz. The transmit power for each device is $100$~mW whereas the power of the BS is set to $10$~W. The total bandwidth of the RBs are set to $5$~MHz in accordance with LTE RBs~\cite{3gpp} and the maximum D2D distance is set to $d_{max} = 15$~m. The main wireless network parameters are listed in Table \ref{table:wirelessparam}. We have set $\rho$, $\zeta$ and $\delta$ to $0.8$, $0.7$ and $4$ respectively in constructing $G_p$ for durable community detection. We describe how to choose these values later in this section.

\begin{center}
\begin{table}[!htb]
	\centering	
	\caption{Main Wireless Network Parameters} 
	\begin{tabular}{|l|l|}
		\hline Notation & Description \\
         \hline Cell dimension & $1000$ x $1000$ $m^2$ \\ 
         BS location & Center of the area \\
         Shadowing std. dev. & $12$ dB \\ 
         Path Loss Exponent & $3$ \\ 
         Noise spectral density & $-174$ dBm/Hz \\ 
		 BS transmit power& $10$ W \\ 
		 D2D transmit power & $100$ mW \\ 
         Maximum D2D distance & $15$ m \\ 
		 RB size & $12$ sub-carriers, $0.5$ ms \\ 		        
		\hline
	\end{tabular}
	\label{table:wirelessparam}
\end{table}
\end{center}

We have compared the performance of our solution, RPF, with Groups-NET (GNET in short) which is a mobility-aware social-based approach that analyses the impact of device mobility on the cellular network performance and multi-hop D2D in particular \cite {gnet}. GNET identifies social groups based on previous social meetings. It then computes the likelihood of each group meeting in future by computing the group-to-group paths by considering the meeting regularity and shared group members. Finally, it identifies the most probable path from the source to the destination by leveraging the group-to-group path probability. It has been shown that GNET outperforms other state-of-the-art methods in terms of improving the cellular network efficiency \cite{gnet}. We also compare our results with two other social-oblivious methods: i) minimizing cost (MC) scheme that chooses devices that minimize the cost of the BS in content transmission, ii) closest to destination (CD) scheme that selects the device that is physically closest to destination at each hop. These greedy methods have been used for relay selection in multi-hop D2D as an efficient way to offload cellular traffic and to enable content transfer through D2D when direct connection can not be established between the source and the destination.

We generated location of total 400 users in the designated area using SLAW model for 72 hours and used first ${\Delta}t=48$ hours for detecting the social encounter based communities. The rest 24 hours were used for simulating the D2D content transfer. We randomly chose 20 cellular users uniformly distributed over the area and 20 pairs of D2D devices as source and target (having distance larger than $d_{max}$) and averaged the results over a large number of independent simulation runs.

\begin{figure*}[!ht]
\centering
  \subfigure[Impact of $t_{max}$ and different content size for RPF] {  		  
		\includegraphics[scale=0.35]{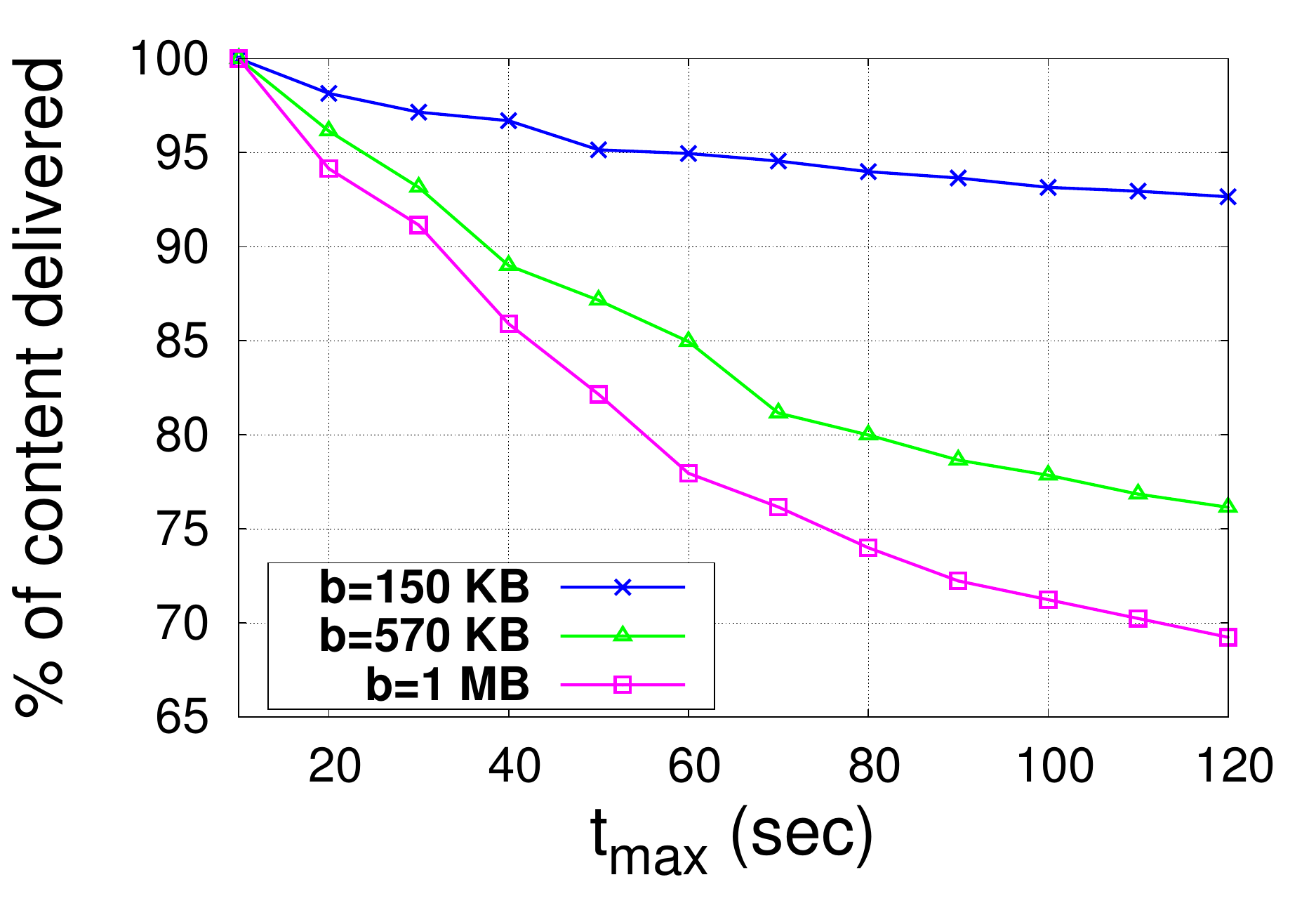}
		\label{fig:acctmax}
	}
	\subfigure[Impact of content size $b$] {
		\includegraphics[scale=0.35]{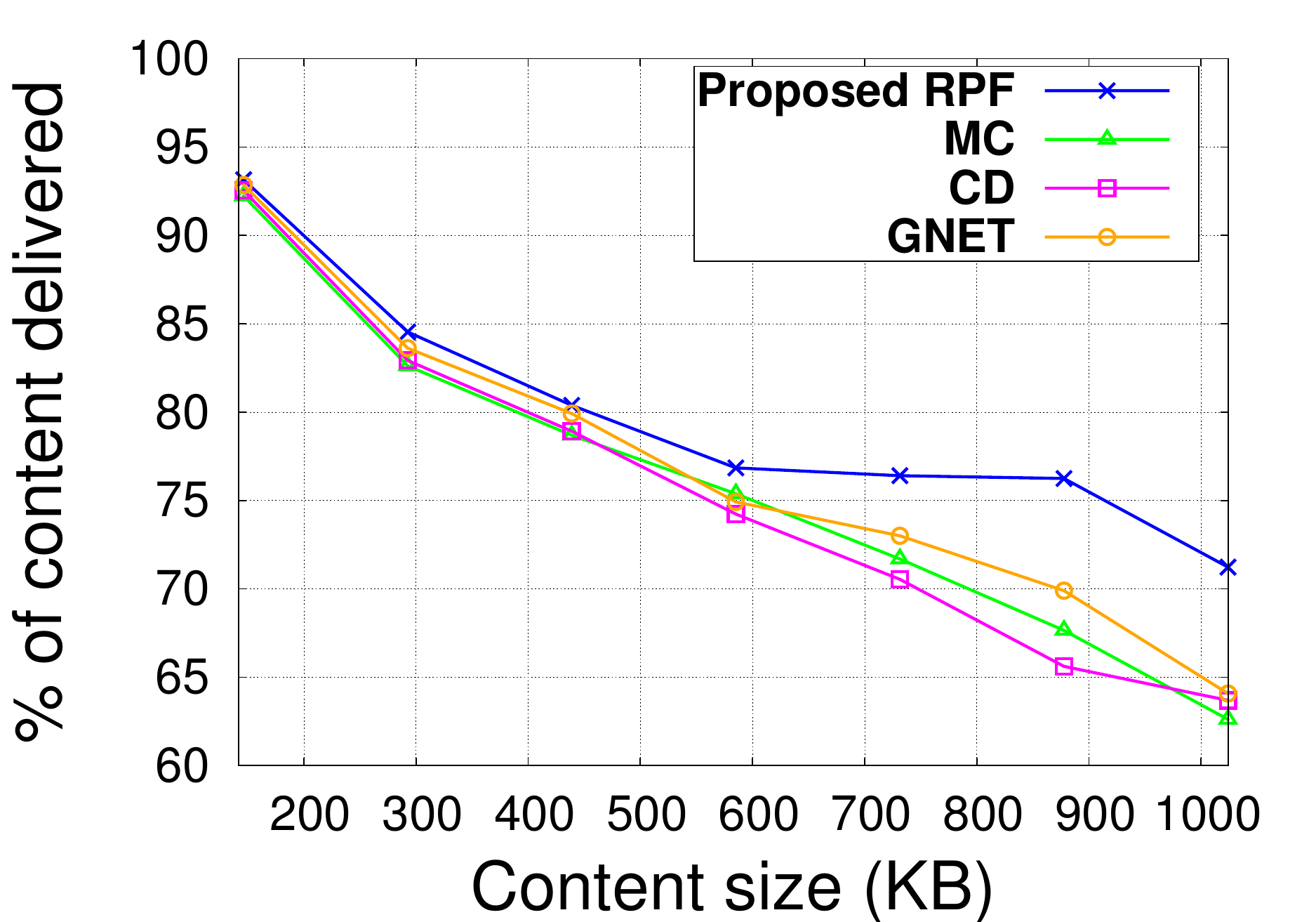}
		\label{fig:accb}
	}
\subfigure[Impact of total users] {
		\includegraphics[scale=0.35]{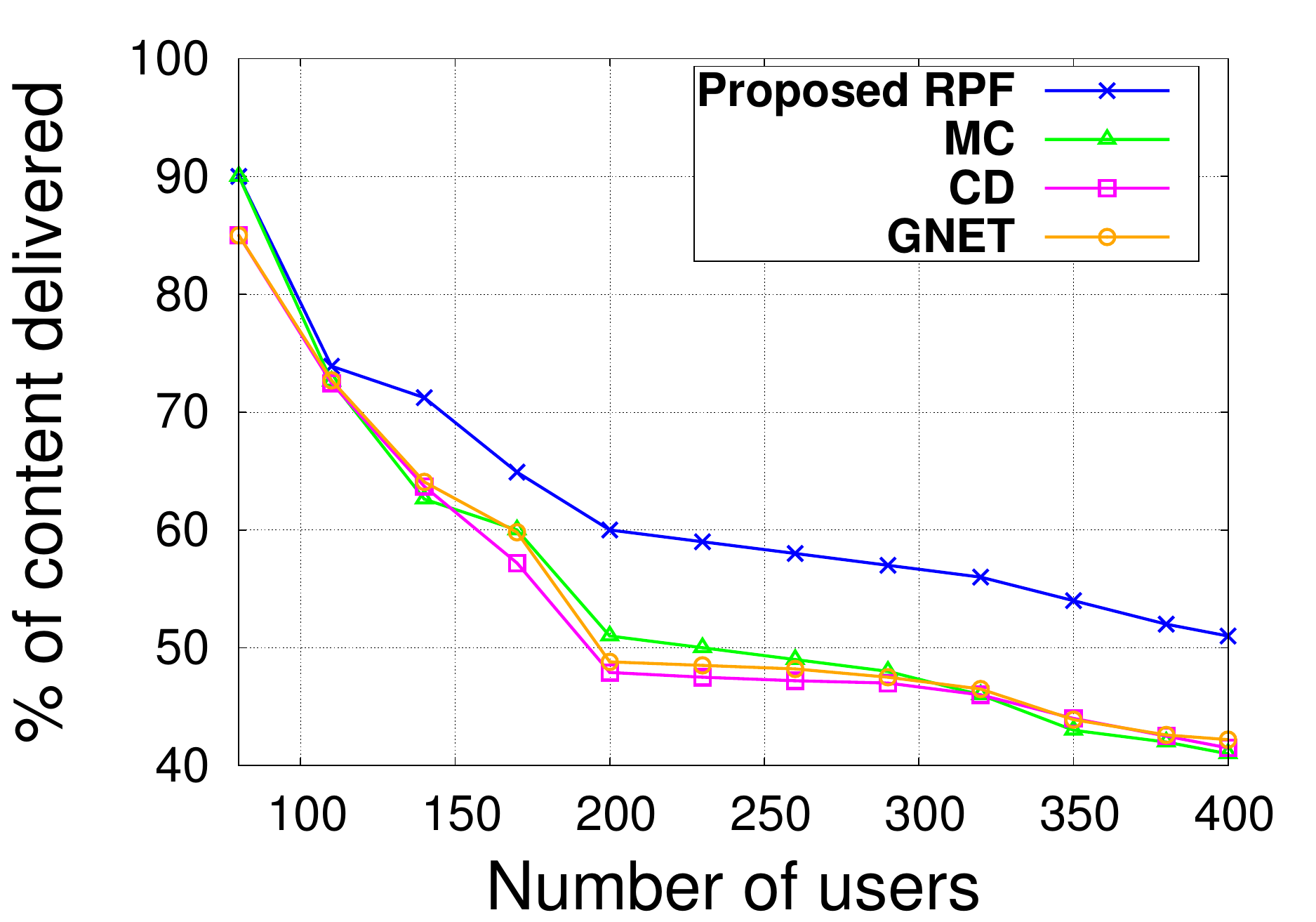}
		\label{fig:accuser}
	}	
  \caption{Content transmission success rate for different cases}
	\label{fig:accuracy}
\vspace{.5cm}
\end{figure*}

Fig. \ref{fig:accuracy} compares the content delivery rate for the proposed algorithm and the baseline approaches when different parameters are varied. In Fig. \ref{fig:acctmax}, we show the content delivery rate achieved by our proposed algorithm RPF for $140$ users and three different content sizes $150$ KB, $570$ KB, and $1$ MB as the content sharing time $t_{max}$ is varied from $10$~s to $120$~s. For a particular content size $b$, with increasing $t_{max}$, the RPF tends to choose devices on the multi-hop path with delivery time close to $t_{max}$ so as to minimize the cost. This results in more hops on a multi-hop D2D path, thus making it more susceptible to device mobility. Consequently, the content delivery success rate keeps decreasing with larger values of $t_{max}$. However, RPF chooses the same multi-hop path  after certain value of $t_{max}$ as the path cost can no longer be minimized within $t_{max}$. From the Fig. \ref{fig:acctmax}, we can see that the delivery success rate remains at around $70\%$ after $t_{max}$ reaches $120$~s for content size $b=1$ MB. For a particular $t_{max}$ value, the delivery success rate decreases with larger content size. Larger content requires more time to be transmitted  which makes them more prone to device mobility. Therefore, a larger content size results in reduced content delivery rate for a fixed $t_{max}$ which is also evident from the Fig. \ref{fig:acctmax}.

In Fig. \ref{fig:accb}, we show how varying the content size impacts the content delivery rate when $t_{max}$ is set to $100$~s for $140$~users. Clearly, as content size increases, the delivery rate decreases. However, the rate of degradation for RPF is much smaller than other methods. This is due to the fact that larger contents require more time for transmission which, in turn, makes the longer D2D session more susceptible to device mobility. In such cases, the mobility of the devices can lead to premature tear down of the multi-hop D2D session. As a result, methods that do not account for social communities in choosing reliable devices on multi-hop will experience a poor delivery rate. Fig. \ref{fig:accb} shows that the content delivery rate is up to $14\%$ higher for the proposed RPF algorithm when compared to the social-unaware scenario for $b=1$~MB.

Fig. \ref{fig:accuser} shows how the content delivery rate varies with the network size. As the number of users increases, one expects a better delivery rate due to more option for multi-hop. However, a large number of users will also increase interference for users that need to transmit on the same RB. In such a scenario with scarce resources, achievable data rate decreases leading to longer transmission time which makes them more susceptible to device mobility. Interestingly, RPF suffers less from the increased user concentration which makes it best device selection method. In Fig. \ref{fig:accuser}, we can see that the proposed RPF is more resilient to mobility than all other approaches. Moreover, the content delivery rate resulting from RPF is up to $24\%$ higher than the social-unaware scenario for a user count of $400$.

Furthermore, from Figs. \ref{fig:acctmax}-\ref{fig:accuser}, we can see that all the baseline methods perform poorly compared to RPF in terms of content delivery success rate. On the one hand, for CD, since it does not consider the signal and noise information, it suffers from poor content delivery. On the other hand, MC always tries to minimize the BS cost which, in some cases, results in choosing devices that require large time to deliver thus making it prone to experience disconnections during mobility. GNET also suffers from poor delivery rate as it prioritizes the most probable community-to-community path. Two adjacent devices belonging to two different communities with large community-to-community path probability will be chosen by GNET, even if they have never met before. As a result, these devices without significant previous meeting records, might move far apart from each other during a transmission session leading to poor delivery of content. On the other hand, RPF's consideration of durable social communities enables it to identify devices that are likely to maintain the required QoS during the whole session by remaining close to one another. 

\begin{figure*}[t]
  \centering  
  \subfigure[Active B2D links vs $t_{max}$] {  		  
		\includegraphics[scale=.35]{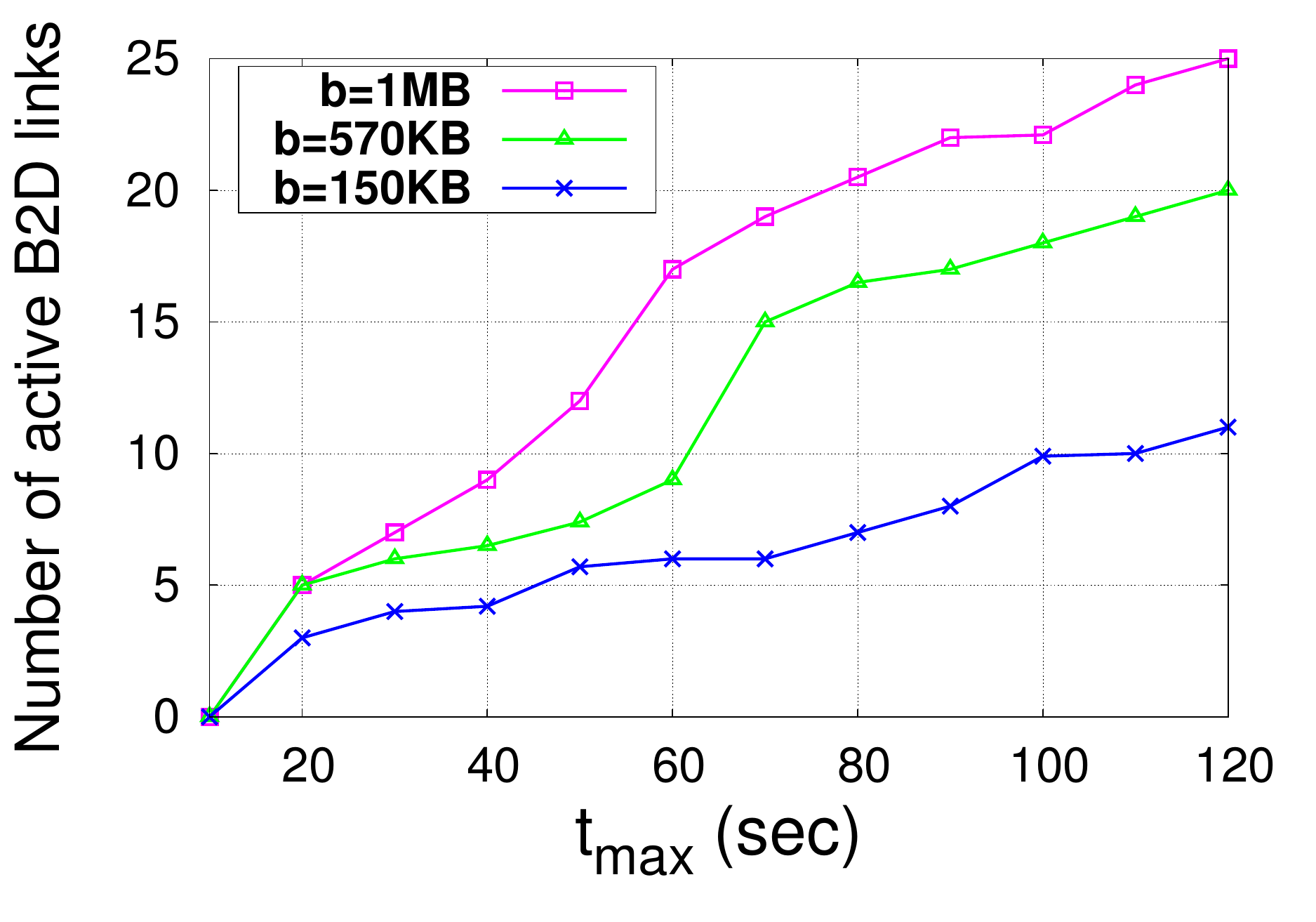}	
	\label{fig:b2dtmax}
	}
	\subfigure[Active B2D links vs SINR] {
		\includegraphics[scale=.35]{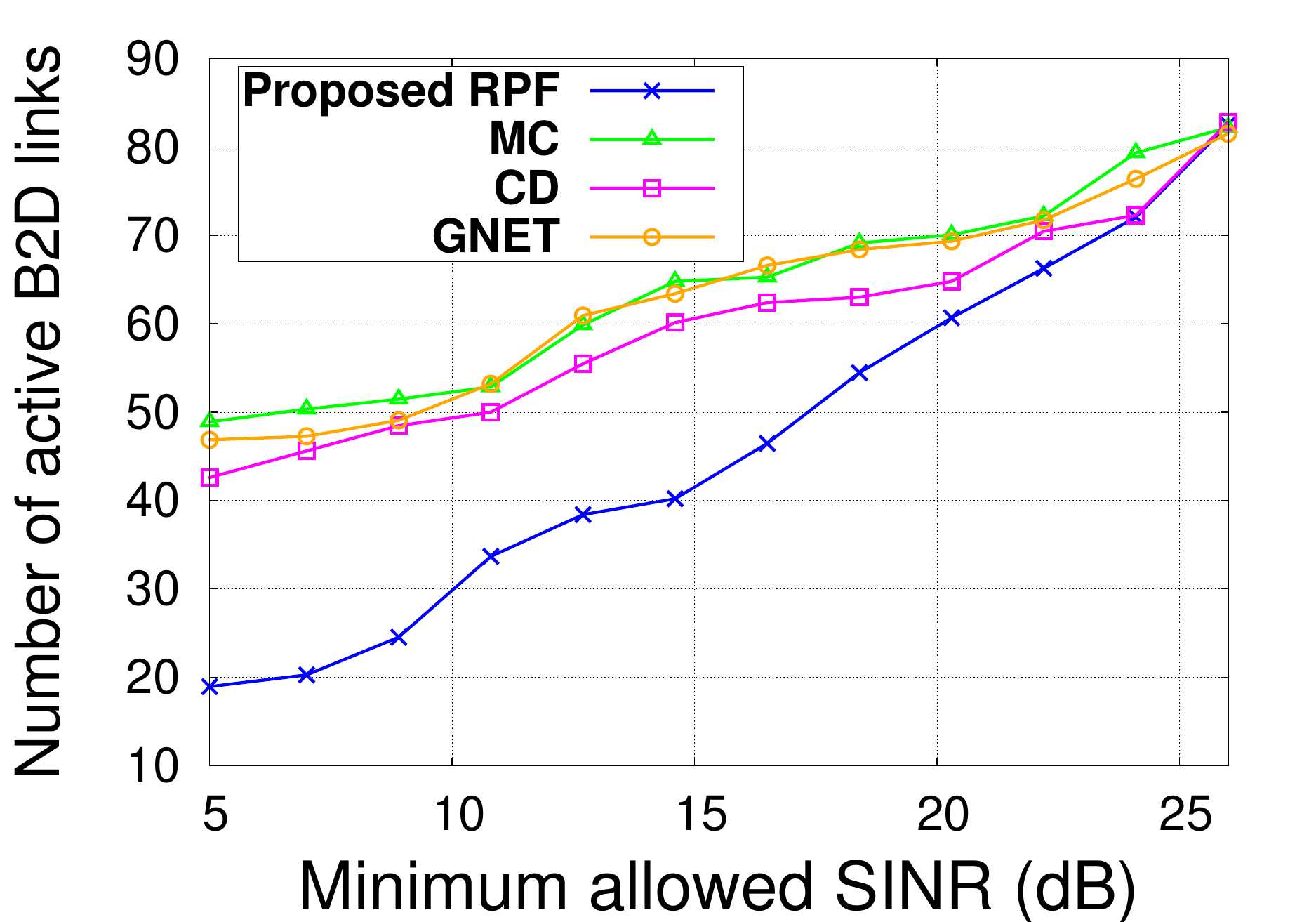}	
  \label{fig:b2ddmax}
	}
  \subfigure[Active B2D links vs content size] {
		\includegraphics[scale=0.35]{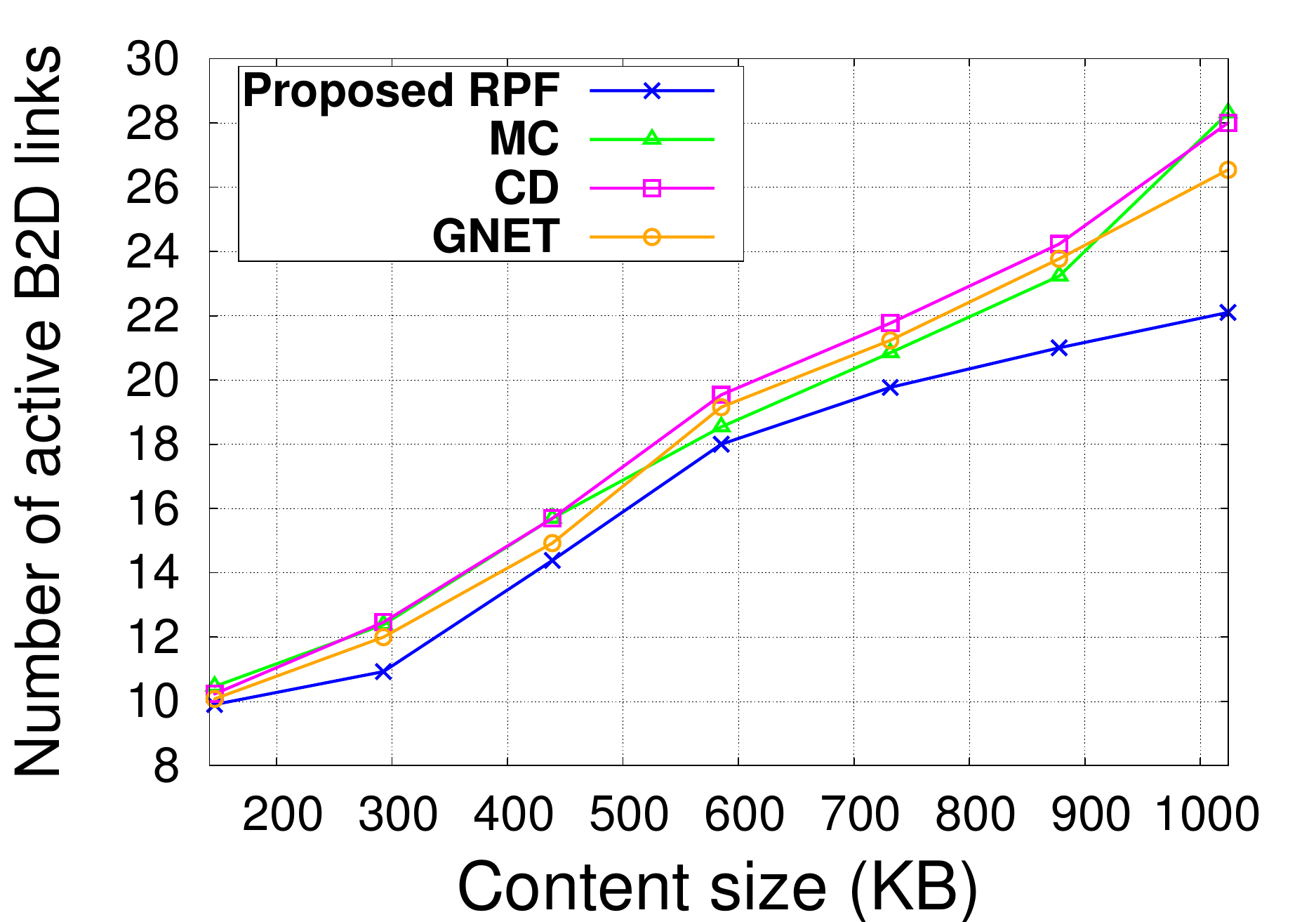}
		\label{fig:b2db}
	}
  \caption{Offload performance analysis for different cases}
	\label{fig:offload}
	\vspace{.5cm}
\end{figure*}

Fig. \ref{fig:offload} evaluates the offload performance of the proposed RPF. For a $100$ seconds duration, we recorded the number of active B2D links in the network which is shown in Figs. \ref{fig:b2dtmax}-\ref{fig:b2db}. The BS initiates a direct cellular connection towards the target when: a) there is no feasible multi-hop path or b) the multi-hop device cost is larger than the direct BS to device (B2D) cost or c) the mobility of devices on a path leads to a premature disconnection of that path.  

Fig. \ref{fig:b2dtmax} shows the impact of increasing $t_{max}$ as contents of three different sizes $150$ KB, $570$~KB, and $1$~MB are transmitted. Since  contents can be transmitted for a longer duration with increasing $t_{max}$, the number of active B2D links increases for each of the content size $b$. This corroborates the intuition that the mobility of devices may disrupt D2D sessions which will require the BS to use costly B2D links. As the content size increases, more time is needed for content transmission which, in turn, makes the multi-hop D2D path more prone to device mobility. Consequently, the premature tear down of an ongoing session due to device mobility leads to the reliance on increasing number of  B2D links where the content is served directly by the BS to the content requester. 

In Fig. \ref{fig:b2ddmax}, we can see the impact of minimum allowed SINR on network traffic offload. In case of smaller SINR, devices can sustain longer D2D sessions since the required QoS for such a communication is low. This results in more  successful content delivery over multi-hop D2D which, in turn, requires less number of B2D links. However, when the allowable SINR is  increased, the tolerance to device mobility is decreased which subsequently results in more active B2D links. From Fig. \ref{fig:b2ddmax}, we can see that the other methods require as high as $158\%$ more B2D links compared to the proposed RPF for a target SINR of $5$~dB.

In Fig. \ref{fig:b2db}, we show the comparative performance of different relay selection methods for varying number of content size from $150$~KB to $1$~MB for a fixed network of size $140$ and $t_{max}=100$~s. As content size increases, all methods will start to increasingly rely on the B2D links. However, RPF requires $28\%$ less B2D
links compared to other methods. The reduction in the number of B2D links demonstrates the improved offload
capabilities of the proposed RPF. Such an offload of traffic from the BS to the D2D tier also
reduces the usage of expensive backhaul traffic.

\begin{figure}
  \centering  
	\includegraphics[scale=.5]{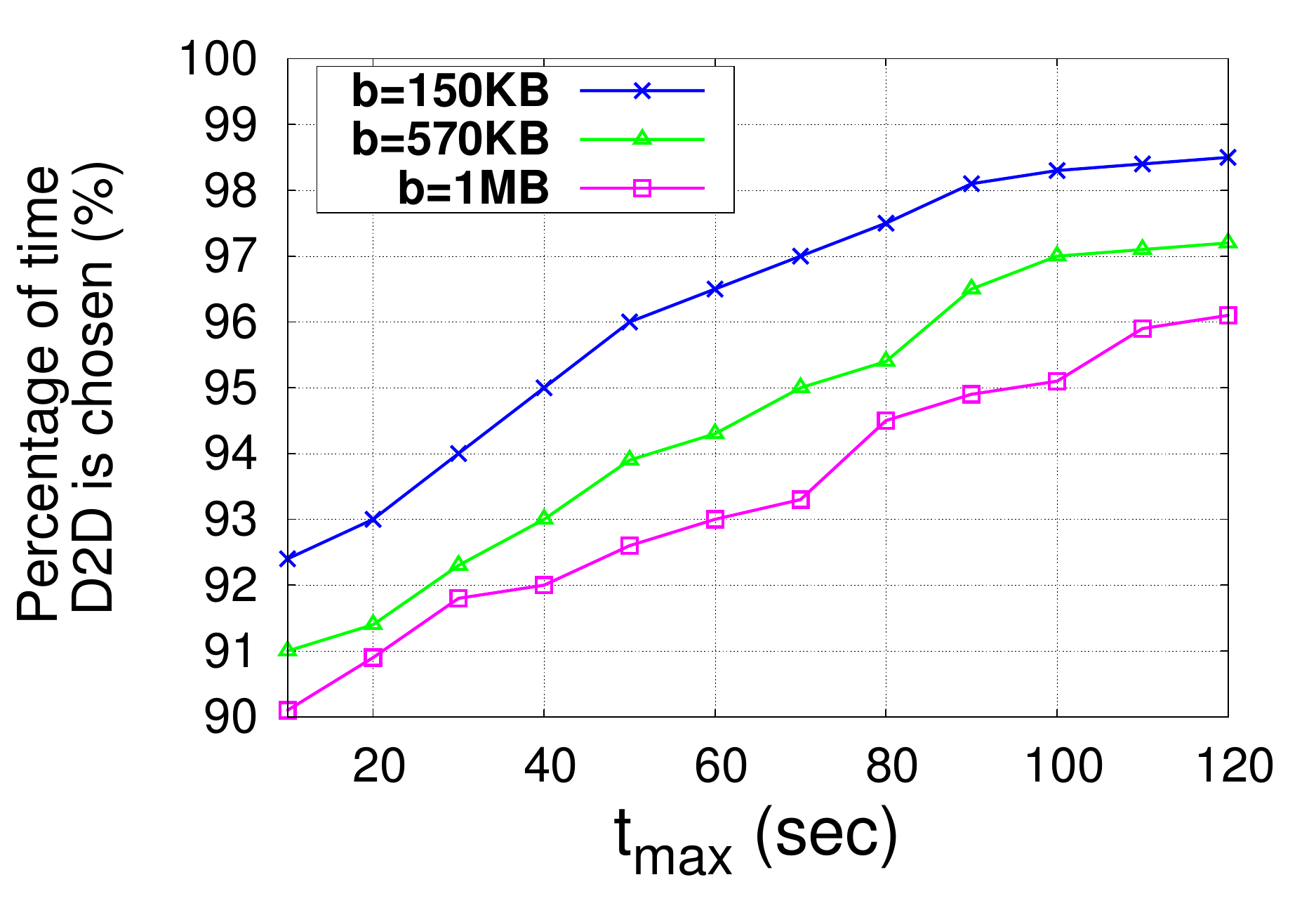}	
\caption{Cost-effectiveness of multi-hop D2D for three different content sizes and a range of $t_{max}$}\label{fig:cost}
\end{figure}

In Fig. \ref{fig:cost}, we show the percentage of time a multi-hop D2D path is chosen instead of an expensive direct B2D link for a user count of $140$. This figure also gives an indication on the quality of the cost functions that we have defined in \eqref{equ:weightNode} and \eqref{equ:BSweight}. Note that, this comparison considers only the cost of direct B2D and D2D relay devices before the transmission starts. Fig. \ref{fig:cost} shows that, over $90\%$ of the time, the RPF chooses multi-hop D2D due to its cost-effectiveness. The small portion of time during which the direct B2D links are used is primarily due to the destination devices which are closer to the BS and can receive the content directly from the BS with low cost. As the allowed $t_{max}$ increases, more D2D links are chosen by RPF compared to B2D links. With increasing $t_{max}$, the RPF tends to choose devices on the multi-hop path with delivery time close to $t_{max}$ as it tries to minimize the cost. This results in a D2D path having a smaller cost, which explains why more D2D links are selected by RPF as the $t_{max}$ increases for a particular content size. Furthermore, as the content size increases, the chances of forming better (less expensive) D2D path starts decreasing. For a given $t_{max}$, RPF has to choose devices of higher cost as the content size $b$ increases in order to find a D2D path that is capable of delivering the content within $t_{max}$. Therefore, a larger content size results in decreased percentage of D2D links being chosen by RPF before the content transmission starts as shown in Fig. \ref{fig:cost}.

Fig. \ref{fig:exectime} shows the execution time needed for our approach. RPF achieves the optimal solution within shortest possible time even in large networks. In almost all realizations, it takes less than a second on the average to compute cost-effective devices on multi-hop path. We performed all the computations on an AMD Opteron(tm) Processor 6168 CPU with 64 GB-memory Linux machine.


In Fig. \ref{fig:heatmap}, we show the impact of different parameters  mentioned in Subsection~\ref{label:lscd} on the performance of the RPF algorithm. Fig. \ref{fig:heatmap} is the heat map representing the impact of $\delta$ (stability threshold) and the weight factor $\rho$ on the content delivery success rate achieved by RPF for a user count of $140$, content size $b=1$~MB, and $t_{max}=100$~s. The success rate is depicted by the RGB colors. As the success rate gets higher, the color becomes lighter in the heat map. From Figure \ref{fig:heatmap}, we can see that the color is lightest, i.e., the contents are successfully delivered, in the top right corner where $\rho=0.8$ and $\delta=4$. The content delivery success rate increases till $\delta=4$ and starts deteriorating as $\delta$ is  increased further. Accordingly, we assign the values of $\rho$ and $\delta$ to $0.8$ and $4$ respectively for this setup of user count, content size and $t_{max}$. We vary the value of strength threshold $\zeta$ from $0.5$ to $0.9$ and choose $\zeta=0.7$ as RPF achieves better content delivery with this setup.

\begin{figure*}[!ht] 
\begin{minipage}[t]{0.5\linewidth}
\centering
\includegraphics[width=1\linewidth]{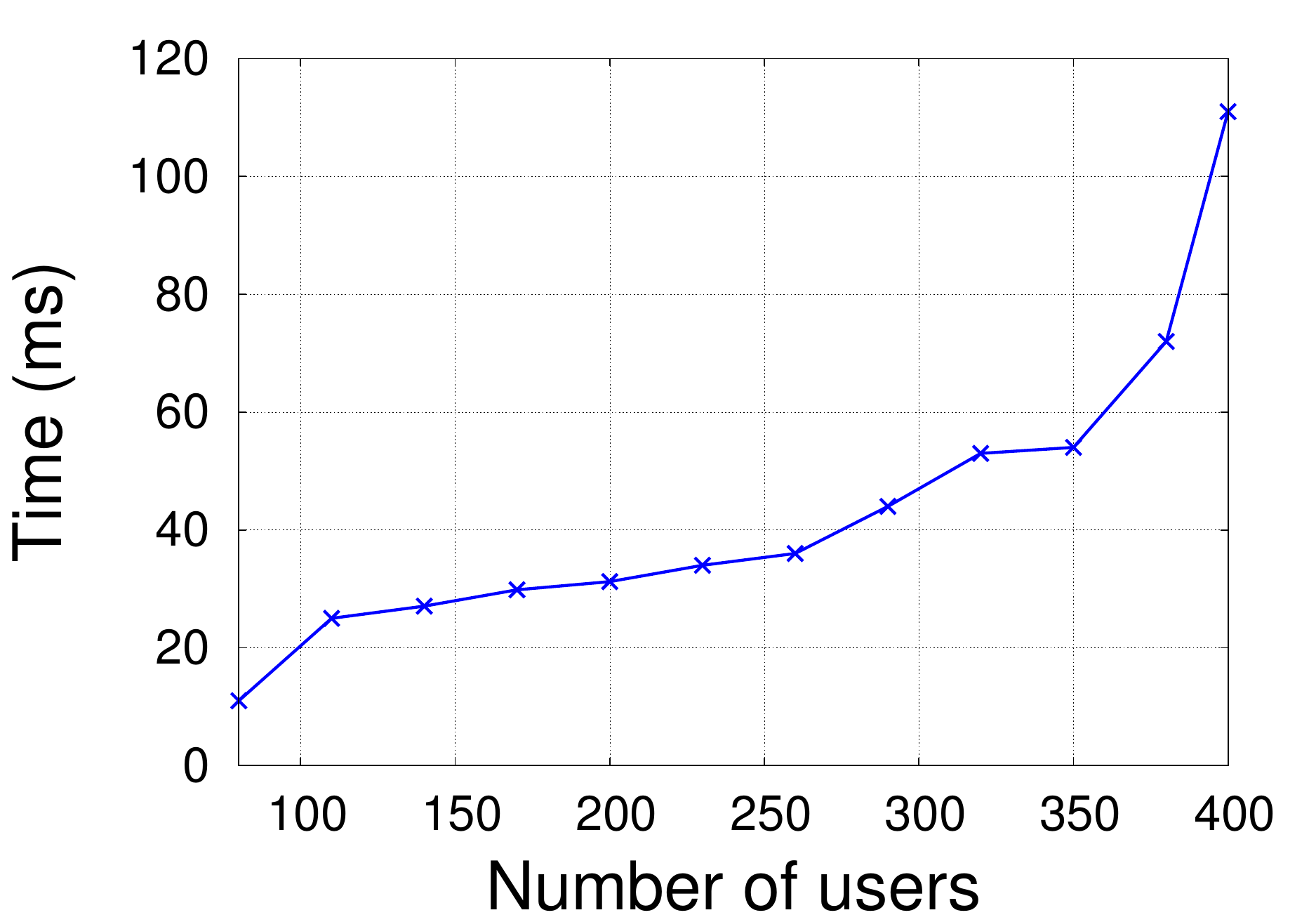}
\caption{Execution time of RPF}
\label{fig:exectime}
\end{minipage}
\begin{minipage}[t]{0.48\linewidth}
\centering
 	\includegraphics[height=0.75\linewidth]{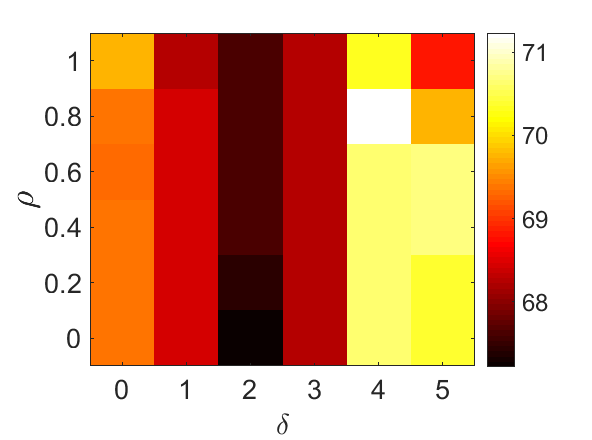}
 	\caption{Impact of different parameters for constructing $G_p$ on the performance of RPF}
 	\label{fig:heatmap}
\end{minipage}
\end{figure*}

\begin{figure}
  \centering  
	\includegraphics[scale=.5]{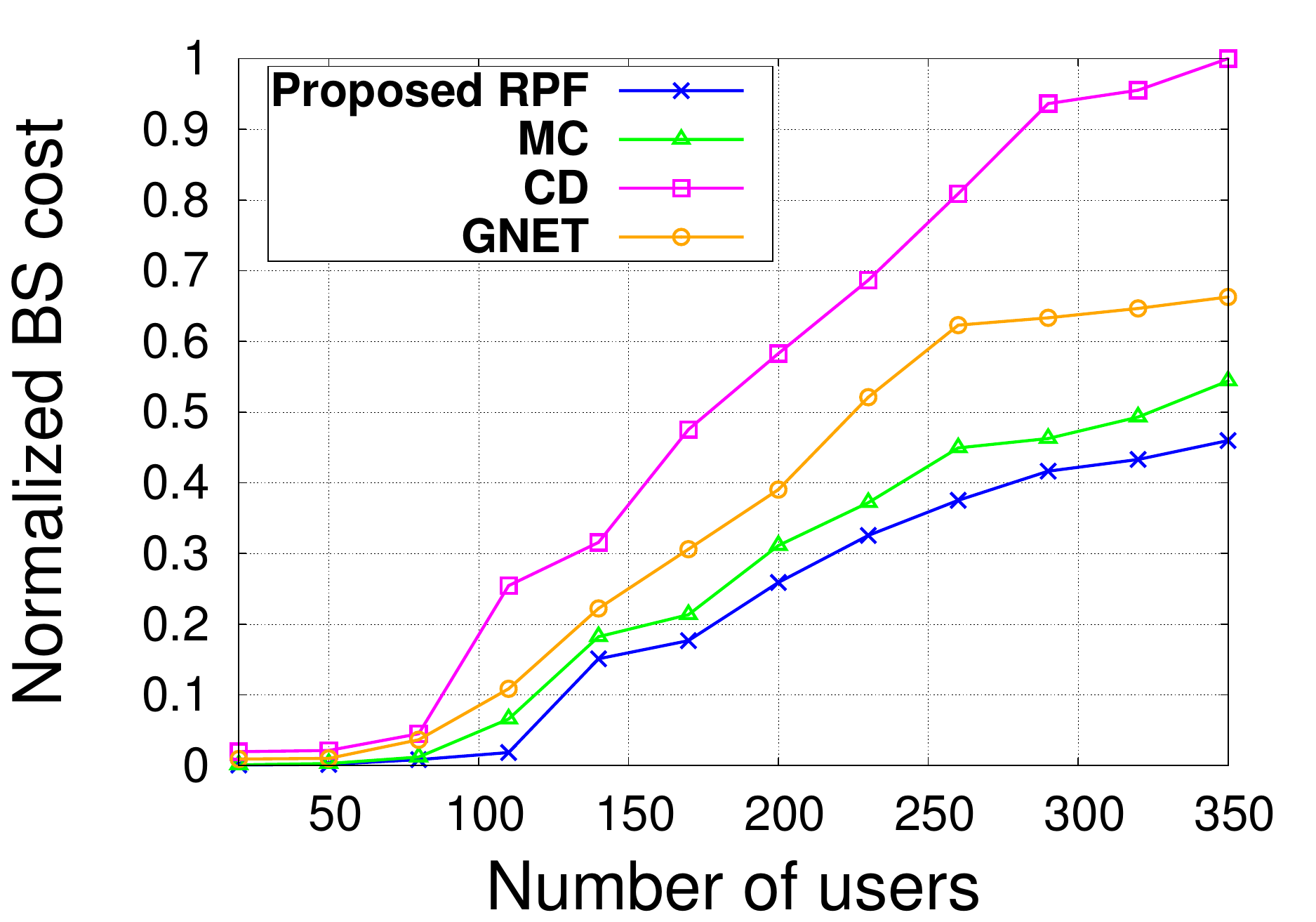}	
\caption{The cost of the BS vs user count}\label{fig:bscost}		  	
\end{figure}

In Fig. \ref{fig:bscost}, we show how the cost of the BS varies with total users in the network. The BS cost is normalized by the highest cost attained for the maximum user count. It is clear that the BS cost for RPF is smaller than that of any other methods. Although, MC aims to choose relay devices that yield a minimum cost, it suffers from poor delivery since it does not take the mobility of devices into account. Therefore, the BS has to invoke expensive B2D links to deliver the content resulting in increased BS cost as evident from the Fig. \ref{fig:bscost}. The other two methods also results in a higher BS cost as they also fail to consider devices' mobility while choosing relay devices. For all of the methods, as the number of users increases so does the interference originating from users that are sharing same resources. In such a scenario, similar to what we have seen in Fig. \ref{fig:accuser}, achievable data rate decreases due to scarce resources. This, in turn, leads to longer transmission time which makes them more susceptible to device mobility and consequently results in higher BS cost. However, as the user count increases, the gap between RPF and other methods also increases validating the superiority of our method in terms of minimizing the BS cost.

\vspace{-16pt}
\section{Conclusion}\label{label:conclusion}
In this paper, we have studied the impact of device mobility on the performance of multi-hop D2D underlaying cellular network. We have introduced a novel model that considers durable communities based on the social encounters of devices for predicting the likelihood of devices' proximity. We have formulated the reliable device selection problem as an IP optimization problem and we have proposed an efficient heuristic algorithm to solve it. We have also shown that leveraging social communities can increase the content delivery rate in multi-hop D2D. Simulation results show that our proposed method outperforms classical social-unaware methods significantly in terms of traffic offload. The results also show that the proposed method achieves its objectives with manageable computational complexity which makes it applicable to larger networks.  

\bibliographystyle{IEEEtran}
\bibliography{ref1}

\begin{thebibliography}{10}
\providecommand{\url}[1]{#1}
\csname url@samestyle\endcsname
\providecommand{\newblock}{\relax}
\providecommand{\bibinfo}[2]{#2}
\providecommand{\BIBentrySTDinterwordspacing}{\spaceskip=0pt\relax}
\providecommand{\BIBentryALTinterwordstretchfactor}{4}
\providecommand{\BIBentryALTinterwordspacing}{\spaceskip=\fontdimen2\font plus
\BIBentryALTinterwordstretchfactor\fontdimen3\font minus
  \fontdimen4\font\relax}
\providecommand{\BIBforeignlanguage}[2]{{%
\expandafter\ifx\csname l@#1\endcsname\relax
\typeout{** WARNING: IEEEtran.bst: No hyphenation pattern has been}%
\typeout{** loaded for the language `#1'. Using the pattern for}%
\typeout{** the default language instead.}%
\else
\language=\csname l@#1\endcsname
\fi
#2}}
\providecommand{\BIBdecl}{\relax}
\BIBdecl

\bibitem{han12}
B.~Han, P.~Hui, V.~S.~A. Kumar, M.~V. Marathe, J.~Shao, and A.~Srinivasan,
  ``Mobile data offloading through opportunistic communications and social
  participation,'' \emph{IEEE Trans. Mobile Computing}, vol.~11, no.~5, pp.
  821--834, May 2012.

\bibitem{fodor12}
G.~Fodor, E.~Dahlman, G.~Mildh, S.~Parkvall, N.~Reider, G.~Miklos, and
  Z.~Turanyi, ``Design aspects of network assisted device-to-device
  communications,'' \emph{IEEE Comm. Mag}, vol. 50(3), pp. 170--177, Mar 2012.

\bibitem{pei2013resource}
Y.~Pei and Y.~Liang, ``Resource allocation for device-to-device communications
  overlaying two-way cellular networks,'' \emph{IEEE Transactions on Wireless
  Communications}, vol.~12, no.~7, pp. 3611--3621, 2013.

\bibitem{madan2010cell}
R.~Madan, J.~Borran, A.~Sampath, N.~Bhushan, A.~Khandekar, and T.~Ji, ``Cell
  association and interference coordination in heterogeneous {LTE-A} cellular
  networks,'' \emph{IEEE Journal on Selected Areas in Communications}, vol.~28,
  no.~9, pp. 1479--1489, 2010.

\bibitem{lin2000multihop}
Y.~Lin and Y.~Hsu, ``Multihop cellular: A new architecture for wireless
  communications,'' \emph{In Proc. of IEEE International Conference on Computer
  Communications}, vol.~3, pp. 1273--1282, 2000.

\bibitem{kaufman2013spectrum}
B.~Kaufman, J.~Lilleberg, and B.~Aazhang, ``Spectrum sharing scheme between
  cellular users and ad-hoc device-to-device users,'' \emph{IEEE Transactions
  on Wireless Communications}, vol.~12, no.~3, pp. 1038--1049, 2013.

\bibitem{asadi2014survey}
A.~Asadi, Q.~Wang, and V.~Mancuso, ``A survey on device-to-device communication
  in cellular networks,'' \emph{IEEE Communications Surveys \& Tutorials},
  vol.~16, no.~4, pp. 1801--1819, 2014.

\bibitem{ma2012distributed}
X.~Ma, R.~Yin, G.~Yu, and Z.~Zhang, ``A distributed relay selection method for
  relay assisted device-to-device communication system,'' \emph{In Proc. of
  23rd International Symposium on Personal Indoor and Mobile Radio
  Communications}, pp. 1020--1024, 2012.

\bibitem{lee2012performance}
D.~Lee, S.~Kim, J.~Lee, and J.~Heo, ``Performance of multihop
  decode-and-forward relaying assisted device-to-device communication
  underlaying cellular networks,'' \emph{In Proc. of International Symposium on
  Information Theory and its Applications}, pp. 455--459, 2012.

\bibitem{vanganuru2012systemmilcom}
K.~Vanganuru, S.~Ferrante, and G.~Sternberg, ``System capacity and coverage of
  a cellular network with {D2D} mobile relays,'' \emph{In Proc. of Military
  Communications Conference}, 2012.

\bibitem{saadincentiveJSAC}
Y.~Zhang, L.~Song, W.~Saad, Z.~Dawy, and Z.~Han, ``Contract-based incentive
  mechanisms for device-to-device communications in cellular networks,''
  \emph{IEEE Journal on Selected Areas in Communications}, vol.~33, no.~10, pp.
  2144--2155, 2015.

\bibitem{wang2012wicom}
L.~Wang, T.~Peng, Y.~Yang, and W.~Wang, ``Interference constrained relay
  selection of {D2D} communication for relay purpose underlaying cellular
  networks,'' \emph{In Proc. of 8th International Conference on Wireless
  Communications, Networking and Mobile Computing}, 2012.

\bibitem{hui2011bubble}
P.~Hui, J.~Crowcroft, and E.~Yoneki, ``Bubble rap: Social-based forwarding in
  delay-tolerant networks,'' \emph{IEEE Transactions on Mobile Computing},
  vol.~10, no.~11, pp. 1576--1589, 2011.

\bibitem{cho2011friendship}
E.~Cho, S.~Myers, and J.~Leskovec, ``Friendship and mobility: user movement in
  location-based social networks,'' \emph{In Proc. of the 17th ACM SIGKDD
  international conference on Knowledge discovery and data mining}, pp.
  1082--1090, 2011.

\bibitem{hui2008human}
P.~Hui and J.~Crowcroft, ``Human mobility models and opportunistic
  communications system design,'' \emph{Philosophical Transactions of the Royal
  Society of London A: Mathematical, Physical and Engineering Sciences}, vol.
  366, no. 1872, pp. 2005--2016, 2008.

\bibitem{wang2014game}
Q.~Wang, W.~Wang, S.~Jin, H.~Zhu, and N.~T. Zhang, ``Game-theoretic source
  selection and power control for quality-optimized wireless multimedia
  device-to-device communications,'' in \emph{In Proc. of IEEE Global
  Communications Conference (GLOBECOM)}.\hskip 1em plus 0.5em minus 0.4em\relax
  IEEE, 2014, pp. 4568--4573.

\bibitem{tan2011graph}
L.~Tan, Z.~Feng, W.~Li, Z.~Jing, and T.~A. Gulliver, ``Graph coloring based
  spectrum allocation for femtocell downlink interference mitigation,''
  \emph{In Proc. of Wireless Communications and Networking Conference (WCNC)},
  pp. 1248--1252, 2011.

\bibitem{proebster2012context}
M.~Proebster, M.~Kaschub, T.~Werthmann, and S.~Valentin, ``Context-aware
  resource allocation for cellular wireless networks,'' \emph{EURASIP Journal
  on Wireless Communications and Networking}, vol. 2012, no.~1, pp. 1--19,
  2012.

\bibitem{brandes2008modularity}
U.~Brandes, D.~Delling, M.~Gaertler, R.~G{\"o}rke, M.~Hoefer, Z.~Nikoloski, and
  D.~Wagner, ``On modularity clustering,'' \emph{IEEE Transactions on Knowledge
  and Data Engineering}, vol.~20, no.~2, pp. 172--188, 2008.

\bibitem{cplex}
\BIBentryALTinterwordspacing
``{IBM ILOG CPLEX Optimization Studio},'' 2014. [Online]. Available:
  \url{http://www-03.ibm.com/software/products/en/ibmilogcpleoptistud}
\BIBentrySTDinterwordspacing

\bibitem{crowcroft96}
Z.~Wang and J.~Crowcroft, ``Quality-of-service routing for supporting
  multimedia applications,'' \emph{IEEE Journal on Selected Areas in
  Communications}, vol.~14, no.~7, pp. 1228--1234, 1996.

\bibitem{Ahuja88}
R.~K. Ahuja, T.~L. Magnanti, and J.~B. Orlin, ``Network flows,'' \emph{DTIC
  Document}, 1988.

\bibitem{slaw}
K.~Lee, S.~Hong, S.~J. Kim, I.~Rhee, and S.~Chong, ``Slaw: A new mobility model
  for human walks,'' \emph{In Proc. of IEEE International Conference on
  Computer Communications}, pp. 855--863, 2009.

\bibitem{3gpp}
\BIBentryALTinterwordspacing
3GPP, \emph{LTE-Advanced (3GPP Release 10 and beyond)}, no. 36.300, Oct 2011.
  [Online]. Available: \url{http://www.3gpp.org}
\BIBentrySTDinterwordspacing

\bibitem{gnet}
I.~O. Nunes, P.~O.~V. de~Melo, and A.~A. Loureiro, ``Leveraging {D2D} multi-hop
  communication through social group meetings awareness,'' \emph{IEEE Wireless
  Communications Magazine}, pp. 1--9, Aug 2016.

\end{thebibliography}

\end{document}